\def\BibTeX{{\rm B\kern-.05em{\sc i\kern-.025em b}\kern-.08em
    T\kern-.1667em\lower.7ex\hbox{E}\kern-.125emX}}
\algrenewcommand\algorithmicrequire{\textbf{Input:}}
\algrenewcommand\algorithmicensure{\textbf{Output:}}
\DeclareMathOperator*{\argmin}{arg\,min}
\theoremstyle{plain}
\newtheorem{theorem}{Theorem}
\newtheorem{proposition}[theorem]{Proposition}
\newtheorem{conjecture}[theorem]{Conjecture}
\newtheorem*{theorem*}{Theorem}
\newtheorem*{lemma*}{Lemma}
\newtheorem*{proposition*}{Proposition}
\newtheorem*{conjecture*}{Conjecture}
\newtheorem{fact*}{Fact}
\newtheorem*{axiom}{Axiom}
\theoremstyle{definition}
\newtheorem*{definition*}{Definition}
\newtheorem*{question*}{Question}
\newtheorem*{example*}{Example}
\newtheorem*{remark*}{Remark}
\newtheorem*{remarks*}{Remarks}
\newtheorem*{exercise*}{Exercise}
\newtheorem*{assumption*}{Assumption}
\newcommand{\floor}[1]{\left\lfloor #1 \right\rfloor}
\def\BibTeX{{\rm B\kern-.05em{\sc i\kern-.025em b}\kern-.08em
    T\kern-.1667em\lower.7ex\hbox{E}\kern-.125emX}}
\begin{document}

\title{Towards analyzing large graphs with quantum annealing and quantum gate computers}

\author{

\IEEEauthorblockN{ 1\textsuperscript{st} Hannu Reittu* \thanks{* corresponding author}}
\IEEEauthorblockA{\textit{ Collective and collaborative AI  } \\
\textit{VTT Technical Research Centre of Finland}\\
P.O. Box 1000, FI-02044 VTT, Finland \\
hannu.reittu@vtt.fi}\\

\IEEEauthorblockN{ 3\textsuperscript{rd} Lasse Leskel\"a}
\IEEEauthorblockA{\textit{Dept.~Mathematics and Systems Analysis} \\
\textit{Aalto University School of Science}\\
Otakaari 1, 02150 Espoo,  Finland\\
lasse.leskela@aalto.fi}\\
\and
\IEEEauthorblockN{ 2\textsuperscript{nd} Ville Kotovirta}
\IEEEauthorblockA{\textit{ Collective and collaborative AI  } \\
\textit{VTT Technical Research Centre of Finland}\\
P.O. Box 1000, FI-02044 VTT, Finland \\
ville.kotovirta@vtt.fi}\\

\IEEEauthorblockN{4\textsuperscript{th} Hannu Rummukainen}
\IEEEauthorblockA{\textit{Collective and collaborative AI } \\
\textit{VTT Technical Research Centre of Finland}\\
P.O. Box 1000, FI-02044 VTT, Finland \\
hannu.rummukainen@vtt.fi}\\
\IEEEauthorblockN{5\textsuperscript{th} Tomi R\"aty}
\IEEEauthorblockA{\textit{ Collective and collaborative AI } \\
\textit{VTT Technical Research Centre of Finland}\\
P.O. Box 1000, FI-02044 VTT, Finland\\
tomi.raty@vtt.fi}
}

\maketitle

 \begin{abstract}
        The use of quantum computing in graph community detection and regularity checking related to Szemeredi's  Regularity Lemma (SRL) are demonstrated with D-Wave Systems' quantum annealer and simulations. We demonstrate the capability of quantum computing in solving hard problems relevant to big data. A new community detection algorithm based on SRL is also introduced and tested. In worst case scenario of regularity check we use Grover's algorithm and quantum phase estimation algorithm, in order to speed-up computations using a quantum gate computers. 
    \end{abstract}   

\section{Introduction}

We are entering the exciting era of quantum computing. There is hope that this new computing paradigm is also useful in studying hard problems in the analysis of large graphs emerging from big data.

The use of quantum computing needs a new mindset. Probably the simplest avenue in this direction is the so-called quantum adiabatic computing and quantum annealing in particular which can be used in almost any optimization task. Quantum gate computing could be used for many more problems than a quantum annealer, but there each algorithm is an untrivial milestone in itself like the celebrated Shor's algorithm. 

The quantum annealing  hardware is reaching over $5000$ qubits (qubits are quantum objects that replace bits in ordinary computation) in the near future while gate computers are developing at the somehow more modest pace. The D-Wave Systems company has made quantum annealing available as a cloud service allowing experiments with over $2000$ qubits as well as an easy to use interface to the system. Hybrid classical-quantum algorithms, available also for D-Wave machines, make solving larger problems possible.

In this work, we consider use of quantum annealing for graph partitioning and, in particular, graph community detection with a new algorithm. We hope that our work will motivate other similar studies in the big data area. Our aim is to demonstrate the potential of quantum computing in analysing large graphs. Such an approach is likely to push the boundary of graph sizes in which good quality solutions can be found. We also demonstrate how quantum annealers are used in concrete cases.       

A starting point of our work is Szemer\'edi's Regularity Lemma (SRL), a cornerstone of extremal graph theory, see e.g. \cite{fox2017}. SRL justifies a kind of stochastic block model structure of bounded complexity for all large graphs. SRL has had a great impact in the theoretical study of large graphs and that is why it can have a decisive role in future big data analysis as well.

SRL's  key concept is an $\epsilon$-regular bipartite graph. It is a bipartite graph in which link density deviations in any sub-graphs are bounded by some positive $\epsilon$. This means that such a bipartite graph is close to random one.  In SRL, the $\epsilon$-parameter can be chosen to be arbitrarily small. SRL states, roughly speaking, that any large graph has a partitioning of nodes to a bounded number of sets in which links between parts follow the $\epsilon$-regularity.  

Regular partitioning can be found in polynomial time. However, deciding $\epsilon$-regularity of a bipartite graph is co-NP-complete problem. We show that the regularity check is a binary quadratic optimization problem. It has the form that can be solved with the D-Wave quantum annealer. As a result, such optimization can be a very hard problem that can be of interest to test efficiency of quantum annealers and we use it as an example of a hard problem arising in large graph analysis. 

It appears that the same optimization task, as used in regularity check, can be used to find communities of an arbitrary graph. This novel algorithm does not need any parameters besides the adjacency matrix. The stochastic block model of communities can be seen as a particular realization of regular partition. In future we shall study a more general case of SRL from this point of view.  The suggested algorithm has some advantages over implementing the standard community detection algorithm on D-Wave \cite{negre}. Namely, it requires only $1$ qubit per graph node and no prior knowledge of the number of communities. In standard approach, each node requires $k$ times more qubits, in which $k$  is the maximal number of communities. Since qubits are scarce resource, this difference is significant.

We anticipate that quantum annealing can produce better quality solutions for large graph problems than classical computation. Interestingly, in \cite{negre} evidence pointing to this direction was already found. The quantum community detection algorithm found the best quality solution, measured in so-called modularity metrics, compared to any previous method. This was the case of well-known test graph with only $34$ nodes, so-called Zachary Karate Club graph. Another point could be that such good solutions can be found in larger scales than is possible with classical computing.    

We test our ideas using D-Wave System and simulations. We also consider the performance of our community detection algorithm using stochastic block models and discuss further challenges.

\section{Regularity check as an optimization problem }

Let $G=G(A,B,d(A,B))$ denote a bipartite graph, in which the set of nodes of $V$, is divided into two disjoint sets $A$ and $B$, see Fig. 1. The number of nodes (cardinality) of a set of nodes $X$, is denoted as $\abs{X}$. The number of links connecting two arbitrary subsets of $V$, $X$ and $Y$ is denoted as $e(X,Y)$. Similarly the link density between two disjoint node sets $X$ and $Y$ is by definition:
$$
d(X,Y)=\frac{e(X,Y)}{\abs{ X}\abs{Y}}
$$
The binary adjacency matrix of $G$ is denoted as $A$. Value $(A)_{i,j}$ is one only if there is a link between nodes $i$ and $j$.
\begin{figure}
\centering
\includegraphics[width=8cm]{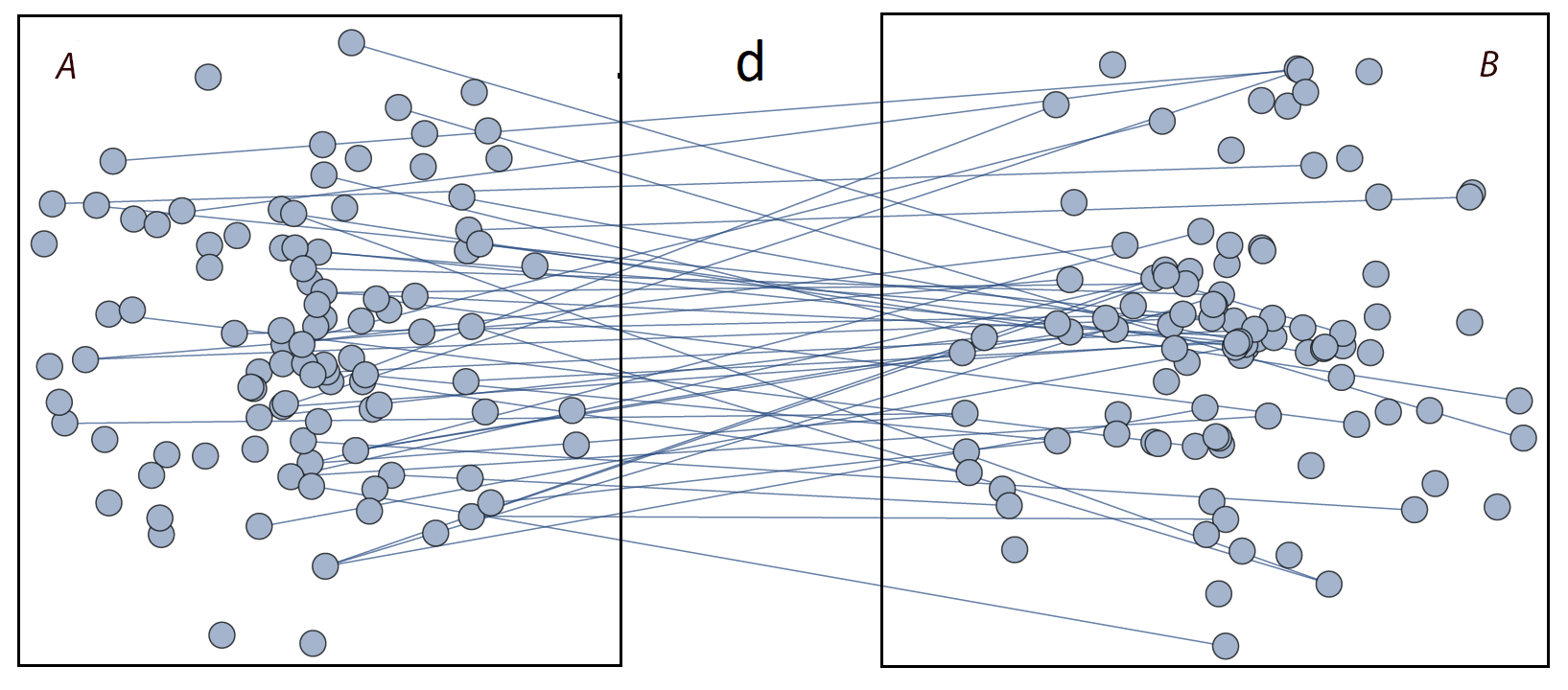}
\caption{A bipartite graph, where links can exist only between nodes in sets $A$ and $B$; link density $d$ is fraction of pairs $(i,j)\in A \times B$ that have links. }
\label{fig:universe}
\end{figure}

A bipartite graph $G(A,B,d(A,B))$ is called $\epsilon$-regular if, for all subsets $X\subset A, Y \subset B$, the following is true:
$$
\abs{ X}\abs{Y}(d(A,B) - d(X,Y)) = O(\epsilon \abs{A}\abs{B}).
$$
This definition was given by T.Tao \cite{tao}. Regularity is a key concept in Szemer\'{e}di's Regularity Lemma \cite{Szeme76}. In the standard definition, it is required that link density deviation is $\epsilon$-small for all large subsets. Tao's definition suits us better since there is not such constraint on subset sizes. Small sets are regular in this sense, just because of their small sizes assuming large sets $A$ and $B$. 

For each bipartite graph, define a function: 
$$
L:2^{A}\times 2^{B}\rightarrow \mathbb{Q},\quad L(X,Y) = \abs{ X}\abs{Y}(d(A,B)-d(X,Y)),
$$
in which $2^M$, denotes set of all subsets of $M$ and $\mathbb{Q}$ is set of rational numbers. For better interpretation, we rewrite:
\begin{eqnarray*}
L(X,Y) = \abs{ X}\abs{Y}d(A,B)-e(X,Y) =\\= \mathbb{E}_d e(X,Y)-e(X,Y),
\end{eqnarray*}
in which $\mathbb{E}_d$ denotes the expectation operator in a random bipartite graph with the link probability equal to $d=d(A,B)$. As a result, $L(X,Y)$ is simply deviation of the number of links in the subgraph, induced by $X \cup Y$, from the expected number of links in the random bipartite graph. In this work we work with minimization of $L$. Maximization is done similarly using $-L$ as the cost function. As a result, $\min L$, corresponds finding the largest fluctuation that exceeds most the expected value $\mathbb{E}e(\cdot,\cdot)$. 

Quantum annealers, like D-Wave,  are capable of solving  quadratic binary optimization problems (qubo):
\begin{eqnarray}
\label{Qubo}
\min_{s} \sum_{i,j}(J_{i,j}s_is_j+h_is_i),
\end{eqnarray}
in which $J$ and $h$ are fixed matrix-valued parameters and  $s$ is a vector of binary variables.

We can easily write the minimization of $L$ in this form. 
For given subsets $X$ and $Y$ assign the values of binary variables $s_i\in\{0,1\}$ to all nodes in $i\in V$:
$$
   i\notin X\cup Y  \Rightarrow s_i=0, \quad i\in X \cup Y \Rightarrow s_i=1. 
$$
As a result:
$$
\abs{X}= \sum_{i\in A} s_i, \quad \abs{Y}= \sum_{i\in B}s_i.
$$
Similarly:
$$
e(X,Y)= \sum_{i\in A, j\in B} a_{i,j} s_is_j,
$$
in which $(A)_{i,j}=a_{i,j}$ is the adjacency matrix of $G$.

Using these notations and because by definition
$$
\abs{ X} \abs{Y} d(X,Y)=e(X,Y),
$$ 
we can write the above program as a qubo: 
\begin{eqnarray}
\label{qubo}
(X_1^*,X_2^*)= \argmin_{X,Y}\sum_{i\in V_1,j\in V_2}(d(A,B) - a_{i,j}) s_is_j,\\\nonumber
X=\{i\in A: s_i=1\}, Y=\{j \in B: s_j=1\})
\end{eqnarray}
Going trough all the configurations of $s$-variables is equivalent of going trough all subsets $X$ and $Y$.

Define the following block matrix $M$:
\begin{eqnarray*}
\{i,j\}\subset V_1 \quad or\quad \{i,j\}\subset V_2 \implies (M)_{i,j}=0
\end{eqnarray*}
and otherwise
$$
(M)_{i,j}=d(A,B)-a_{i,j}.
$$
Using $M$ we can write: 
\begin{eqnarray}
\label{qubo2}
L(X,Y) = \frac{1}{2}\sum_{i,j}s_iM_{i,j}s_j= \frac{1}{2}(s,Ms),
\end{eqnarray}
in which $s$ is the vector of $s$-variables, $(\cdot,\cdot)$ is inner product of vectors and the summing is over all indices $i$ and $j$.

$\epsilon$-regularity of a bipartite graph means that $L$ is $\epsilon $-bounded function for that graph. As a result, finding global minimum and maximum of this function would resolve the $\epsilon$-regularity check decision problem. Since this problem is co-NP-complete, it is likely that there are no efficient algorithms for finding the minimum and maximum of $L$ function for all graphs. For this reason, the optimization of $L$ can provide a needed challenge for quantum computing to demonstrate its power.

\section{Community detection algorithm}
\subsection{Stochastic block model}

As we see in the following, finding the $\min L$ in a bipartite graph can be seen also as a basic operation in finding communities in a graph. We consider the case when the graph has communities generated from a stochastic block model (SBM), for a review see \cite{abbe}. 

SBM($n,k,P,D$) is a generative probabilistic graph model defined as follows. Here $P$ is a probability distribution on $[k]=\{1,\dots,k\}$ and $D$ is a symmetric $k$-by-$k$ matrix with entries $D_{i,j} \in [0,1]$. The model is generated by first sampling node labels $\sigma(1),\dots,\sigma(n)$ independently from $P$, and then creating a random graph on node the set $V$ by linking each unordered node pair $\{u,v\}$ with probability $D_{\sigma(u),\sigma(v)}$, independently of other node pairs. The node labeling $\sigma: V \to [k]$ partitions the node set into $k$ disjoint communities $V_i = \sigma^{-1}(i)$, so that
$$
V = V_1 \cup \cdots V_k.
$$
Conditionally on the node labeling $\sigma$, the nodes between communities $V_i$ and $V_j$ are hence linked with probability $D_{i,j}$. The resulting random graph is denoted as  $\mathcal{G}(n,k,P,D)$.

\subsection{Community detection algorithm}

In our previous works, we have extensively referred to SRL as a basis for graph analysis using various SBMs as a modeling space \cite{reittuetall,nepusz2008,pehkonenreittu,reittujoensuu,reittubazsonorros,reittuBigData2018}. Here we introduce another contact point between SBM and SRL.

Assume that a graph $G$ is drawn from $\mathcal{G}(n,k,P,D)$ as described above. We also assume that $n$ is large enough. 

The first step is to find a bipartite subgraph, $G'$ of $G$:
\begin{itemize}
    \item divide nodes of $G$ into two disjoint sets $A$ and $B$, by tossing a fair coin for each node
    \item $G'$ inherits all links from $G$ that join $A$ and $B$ while all links inside $A$ and $B$ are deleted.
\end{itemize}
This procedure is schematically shown in Fig. \ref{split1}

\begin{figure}
\centering
\includegraphics[width=6cm, height= 6cm]{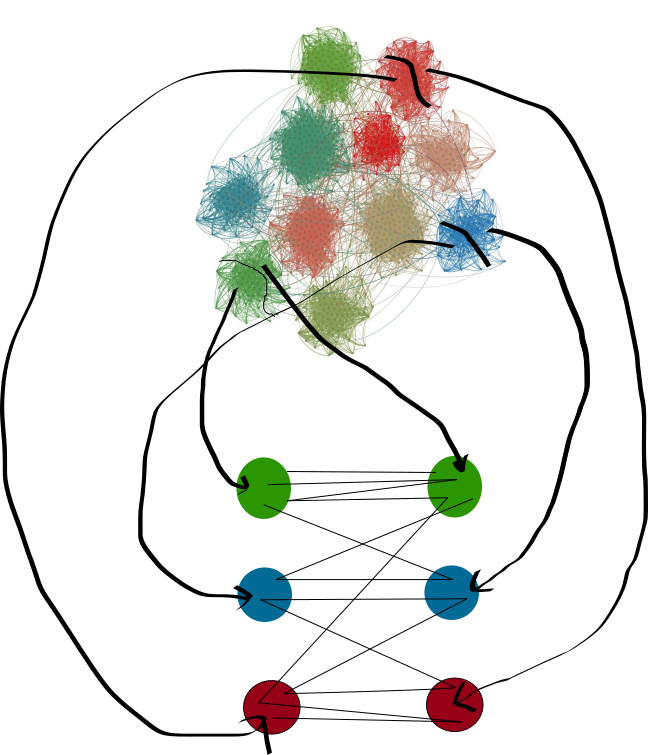}
\caption{Generation of a bipartite graph $G'$ by a random split. At the top is a generic graph with communities indicated by different colors. Nodes are divided into two random sets, say, by tossing a fair coin. Each community is roughly split into two parts, one at the left ($A$) and other in the right ($B$). For three communities (indicated by green, blue and red balls) the bipartite graph $G'$ is shown in the lower part. Only those links in $G$ joining $A$ and $B$ are preserved in $G'$.   }
\label{split1}
\end{figure}

We denote: $A_i := A \cap V_i$ and $B_i := B \cap V_i$ with sizes $a_i:=\abs{A_i}$ and $b_i:=\abs{B_i}$ for $i=1,\cdots, k$. It is clear that random variables $(a_1,\cdots,a_k,b_1,\cdots, b_k)$ have a multinomial distribution with expectations $\mathbb{E}a_i = \mathbb{E} b_i = n P_i/2$ for all $i$. For large $n$ these random variables are well concentrated around their expected values. 

Denote by $d$ the link density of bipartite graph: $d = e(A,B)/(\abs{A}\abs{B})$. For a large graph, $d$ is close to the expected link density of the original graph $G$, $d(G)$, with high probability. We require the following
\begin{equation}\label{condition}
  d(G) - D_{i,i} < 0, \forall i.
\end{equation}

This inequality means that all communities have internal density above the average density. For large graphs, we also have with high probability: 
$$
d - D_{i,i}<0, \forall i. 
$$

It is required that the Condition \ref{condition} holds when $G$ is replaced by a subgraph of $G$ in which arbitrary communities are deleted.

We do not provide a lengthy proof of the last claim. Typically probabilistic estimates are exponential, so this statement has high probability already with moderate graph sizes. 

The idea behind community detection is the following. If there are bigger densities of links inside the communities than those between the communities, then the communities in the split graph are associated with the denser parts of the corresponding bipartite graph. As a result, there is a chance that communities can be found with the help of $\argmin L$ applied to the split graph.
\begin{figure}
\centering
\includegraphics[width=4cm, height=6.5cm]{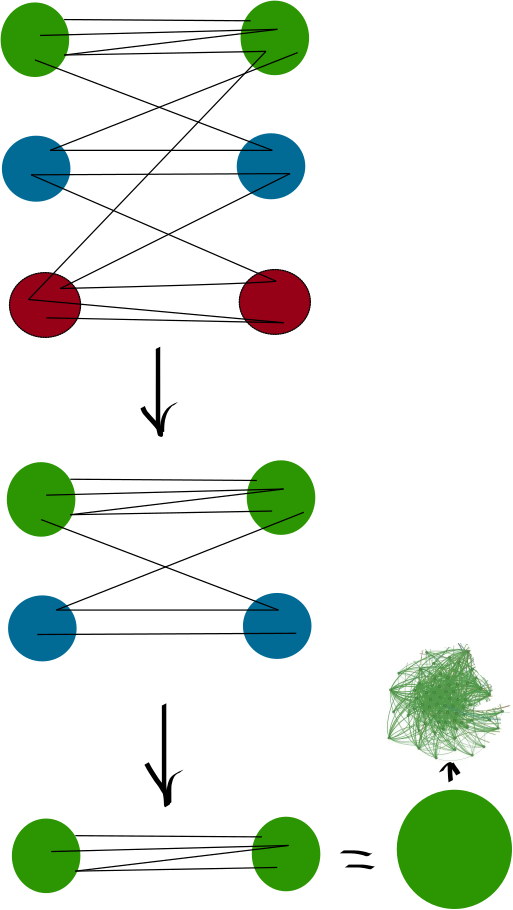}
\caption{The stages of community detection algorithm on a bipartite graph with split communities indicated by colors. At each stage, some communities are deleted. The program ends when only one community is left, the green ball in the figure. By deleting the found community from the whole graph, one could proceed to find the next community and so on until all are found.   }
\label{split}
\end{figure}

The next step of the algorithm is to construct the $L$-function for the bipartite graph of the split communities described above. The output of the algorithm is the subgraph induced by  $\argmin L$.  The algorithm works correctly if the following conjecture is true:
\begin{conjecture}\label{conjecture}
Consider graph $G$ that is generated from $\mathcal{G}(n,k,P,D)$ with communities $V_1,\cdots,V_k$ and Condition \ref{condition} holds. Construct a random evenly split bipartite graph $G'$ with bipartition $(A,B)$, $A = \cup_{i} A_i$ and $B = \cup_i B_i$ and in which sets with indices $i$ are subsets of community $V_i$ for $i=1,\cdots,k$. Let $L(\cdot,\cdot)$ correspond to graph $G'$. Then with probability tending to $1$ as $1-\exp(-n^z)$ when $n\rightarrow \infty$ and with some constant $z>0$, the following holds:
$(X^*,Y^*)=\argmin L(X,Y)$ $\Rightarrow$ there is a proper subset of indices $I \subset \{1,\cdots,k\}$ such that $X^* = \cup_{i\in I} A_i$ and $Y^* = \cup_{i \in I} B_i$.
\end{conjecture}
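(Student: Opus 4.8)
The plan is to recast the problem as a maximization, $\min_{X,Y} L = -\max_{X,Y}\left(e(X,Y) - d\abs{X}\abs{Y}\right)$ with $d = d(A,B)$, and to analyze the maximizer in three layers: a local \emph{integrality} argument that forces the optimum to respect community boundaries, a concentration step that replaces the random objective by its deterministic coarse version, and a combinatorial symmetry analysis of the coarse problem.

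First I would exploit the fact that for a fixed $Y$ the objective $e(X,Y) - d\abs{X}\abs{Y} = \sum_{u\in A}\mathbf{1}\{u\in X\}\left(\deg_Y(u) - d\abs{Y}\right)$ is modular (linear) in the indicator of $X$, so the best response includes exactly the vertices $u$ with $\deg_Y(u) > d\abs{Y}$. The crucial structural point is that within a single block $A_i$ every vertex has the same conditional mean degree into $Y$, namely $\mu_i(Y) = \sum_j D_{i,j}\abs{Y\cap B_j}$, which depends on $u$ only through its community $i$. A Chernoff/Hoeffding bound then shows that, with probability at least $1-\exp(-cn)$, all vertices of $A_i$ fall on the same side of the threshold $d\abs{Y}$, provided the gap $\abs{\mu_i(Y) - d\abs{Y}}$ is of order $n$. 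Since these gaps are $\Theta(n)$ while degree fluctuations are $\Theta(\sqrt n)$, each per-vertex error probability is $\exp(-\Theta(n))$, small enough to survive a union bound over the at most $2^n$ relevant sets after taking $c$ large; this yields $X^\ast = \cup_{i\in I}A_i$ and, symmetrically, $Y^\ast = \cup_{j\in J}B_j$ for some index sets $I,J$.

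Second, conditioning on the block sizes (which concentrate at $a_i,b_i = \tfrac{n}{2}P_i(1+o(1))$) and using concentration of $e(A_i,B_j)$ around $a_ib_jD_{i,j}$ and of $d$ around $\bar d = \sum_{i,j}P_iP_jD_{i,j}$, I would reduce the optimization over $(I,J)$ to maximizing the deterministic bilinear form $\Phi(I,J) = \sum_{i\in I,\,j\in J}P_iP_j(D_{i,j}-\bar d)$, up to additive fluctuations negligible relative to the $\Theta(n^2)$ scale of $\Phi$. Properness is then immediate: $\Phi(\emptyset,\cdot)=0$ and $\Phi([k],[k]) = \bar d-\bar d = 0$, whereas Condition (\ref{condition}) gives $\Phi(\{i\},\{i\}) = P_i^2(D_{i,i}-\bar d) > 0$, so the optimum is strictly positive and cannot be attained at $I=\emptyset$ or $I=[k]$.

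The main obstacle is showing that the optimal index sets coincide, $I=J$. The symmetry $D_{i,j}=D_{j,i}$ together with $a_i\approx b_i$ makes the two sides interchangeable, and the best-response maps agree, $i\in I \iff g_i(J)>0$ and $j\in J \iff g_j(I)>0$ with the same function $g_i(S)=\sum_{\ell\in S}P_\ell(D_{i,\ell}-\bar d)$. I would try to upgrade this to $I=J$ by an exchange argument, showing that replacing $(I,J)$ by $(I\cap J,I\cap J)$ or $(I\cup J,I\cup J)$ does not decrease $\Phi$, with the positive diagonal excesses $D_{i,i}-\bar d$ from Condition (\ref{condition}) supplying the supermodular slack. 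Two places I expect real difficulty are exactly here---the form $Q_{ij}=P_iP_j(D_{i,j}-\bar d)$ is indefinite and can in principle admit asymmetric optima, so some genericity or assortativity hypothesis beyond Condition (\ref{condition}) may be needed---and in the borderline concentration regime of the first step, where a block's coarse value $\mu_i(Y)-d\abs{Y}$ sits within $O(\sqrt n)$ of zero. There I would argue that the coarse analysis places the optimal $I$ strictly away from such ties, so that near-threshold blocks do not arise at the true optimum and the exponential bound $1-\exp(-n^z)$ survives, with the exponent $z$ governed by the smallest realized gap.
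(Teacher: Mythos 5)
First, be aware that the paper does not actually prove this statement: it is posed as a conjecture, the authors state explicitly that they ``do not possess a full proof,'' and the only argument supplied is a sketch of Proposition~\ref{proposition}, which treats the \emph{expected} objective $L_1(X,Y)=\sum_{i,j}x_iy_j(d-D_{i,j})$ conditionally on the split sizes. Your deterministic core --- linearity of $e(X,Y)-d\,|X|\,|Y|$ in the indicator of $X$ for fixed $Y$, which forces the optimum onto block-respecting ``corner'' configurations --- is essentially the same observation that the paper packages as a box-constrained continuous relaxation with a gradient/corner-point argument, and your reduction to maximizing $\Phi(I,J)=\sum_{i\in I,\,j\in J}P_iP_j(D_{i,j}-\bar d)$ together with the properness argument $\Phi([k],[k])=0<\Phi(\{i\},\{i\})$ is a cleaner rendering of what Proposition~\ref{proposition} asserts. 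Your concentration layer goes beyond anything in the paper, which only gestures at ``concentration inequalities like Chernoff bounds'' and a martingale argument as future work.

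That said, the proposal is not a proof, and the holes are real. (a) The union bound in your first layer does not close as written: you cannot simply ``take $c$ large,'' because the Chernoff exponent for $\deg_Y(u)$ is governed by the gap $|\mu_i(Y)-d\,|Y||$, which for arbitrary $Y$ can be $o(n)$ or exactly zero; and since $Y^*$ is random you must control all $2^{|B|}$ candidate sets $Y$, not only the coarse optimizers. Your proposed repair --- first localize $(X^*,Y^*)$ near the coarse optimum, then run the per-vertex threshold argument only there --- needs a quantitative stability input: the optimal pair $(I,J)$ for $\Phi$ must be separated in value from every other corner by a constant (hence by $\Theta(n^2)$ after rescaling), and this does not follow from Condition~\ref{condition} alone; it is a genericity assumption on $(P,D)$ that would have to be added to the hypotheses. (b) The step $I=J$ is a genuine gap, as you say yourself: with $Q_{ij}=P_iP_j(D_{i,j}-\bar d)$ indefinite, the identity $u^{\top}Qv=\tfrac14\bigl[(u+v)^{\top}Q(u+v)-(u-v)^{\top}Q(u-v)\bigr]$ only forces symmetric optima when $Q$ is positive semidefinite on the difference directions, and your intersection/union exchange does not obviously gain from the positive diagonal alone. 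The paper's corresponding claim (``easy to see\dots due to $a_i\approx b_i$ and Condition~\ref{condition}'') is no better justified. In short: your roadmap is sound, more explicit, and more ambitious than the paper's own treatment, but the two difficulties you flag are precisely where an actual proof would have to do the work, and the first of them likely forces an extra hypothesis beyond Condition~\ref{condition}.
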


We do not possess a full proof of this claim. As a first sketch, we consider optimization of expected $L$-function conditional to the sizes of split communities. The basic setting is the same as in Conjecture \ref{conjecture}. We denote $x_i = \abs{X \cap A_i}$ and $y_i = \abs{Y \cap B_i}$. These integers are bounded by $0 \leq x_i \leq a_i$ and $0 \leq y_i \leq b_i$. In shorthand, we write $x \in [0,a]$ and $y \in [0,b]$ and recall that such constraints are usually referred as box constraints \cite{floudas}. In these notations, we have:
$$
 L(X,Y)
 = \sum_{i,j} (x_i y_j d(G') - e(X \cap A_i, Y \cap B_j)).
$$

Let us condition with respect to $a_1, \cdots, a_k, b_1,\cdots b_k$ and take the expectation of $L$ over the SBM:
$$
 L_1(X,Y)
 := \mathbb{E}L(X,Y)
 = \sum_{i,j} x_i x_j(d - D_{i,j}),
$$
in which $d$ is the expected link density of the bipartite graph conditionally on the underlying community structure. We assume that we are in the high-probability event when the Condition \ref{condition} holds.  

\begin{proposition}\label{proposition}
$(X^*,Y^*) = \argmin L_1(X,Y)$ has the same structure as in Conjecture \ref{conjecture} in the sense that for the $(X^*,Y^*)$, $(x_i,y_i) \in \{(0,0), (a_i,b_i)\}$ for all $i$, and there exists index $i$ such that $(x_i,y_i)=(0,0)$. 
\end{proposition}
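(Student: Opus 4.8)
The plan is to exploit the fact that $L_1$ is a \emph{bilinear} form in the occupation vectors. Writing $x=(x_1,\dots,x_k)$, $y=(y_1,\dots,y_k)$ and $M_{ij}=d-D_{ij}$, we have $L_1(X,Y)=\sum_{i,j}x_i y_j M_{ij}=x^{\top} M y$, where $M$ is symmetric (because $D$ is) and has strictly negative diagonal, $M_{ii}=d-D_{ii}<0$, by \eqref{condition}. Two facts drive everything: first, the guaranteed negative contribution on the diagonal; and second, that $\sum_{i,j}a_i b_j M_{ij}=0$, which holds because $d$ is exactly the community-weighted average density $\sum_{i,j}a_ib_jD_{ij}/(\abs{A}\abs{B})$, so that the full configuration satisfies $L_1(A,B)=0$.

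First I would show the minimum is attained at a vertex of the box $\prod_i[0,a_i]\times\prod_j[0,b_j]$. Since $L_1$ is affine in each coordinate separately, for fixed $y$ the map $x\mapsto x^{\top} M y$ is linear and is minimized at a vertex of $\prod_i[0,a_i]$; the resulting value $\min_x x^{\top} M y$ is a minimum of finitely many linear functions of $y$, hence concave, and therefore attains its minimum over $\prod_j[0,b_j]$ at a vertex as well. Consequently every minimizer may be taken with $x_i\in\{0,a_i\}$ and $y_j\in\{0,b_j\}$, and the problem reduces to choosing subsets $S=\{i:x_i=a_i\}$ and $T=\{j:y_j=b_j\}$ so as to minimize $f(S,T)=\sum_{i\in S,\,j\in T}a_i b_j M_{ij}$.

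The core of the proposition, and the step I expect to be the main obstacle, is the \emph{pairing} claim $S=T$, i.e.\ that $x_i=a_i\iff y_i=b_i$. I would argue it by a local-exchange analysis of a minimizer: optimality in $x_i$ forces $i\in S\iff \sum_{j\in T}b_j M_{ij}\le 0$, and optimality in $y_j$ forces $j\in T\iff \sum_{i\in S}a_i M_{ij}\le 0$. If some index lay in $S\setminus T$, membership in $S$ would give $\sum_{j\in T}b_j M_{ij}\le 0$, while non-membership in $T$ would give $\sum_{j\in S}a_j M_{ij}\ge 0$; isolating the diagonal in the latter yields $\sum_{j\in S\setminus\{i\}}a_j M_{ij}\ge a_i\abs{M_{ii}}>0$. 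The negative diagonal thus demands a strongly positive off-diagonal row-sum over $S$ at the same time as the corresponding $T$-row-sum is nonpositive — an incompatibility that closes the argument once the two sides, and the weight vectors, are aligned. When the split is balanced so that $a_i=b_i$, this can be packaged cleanly via the identity
$$
2f(S,T)=F(S\cup T)+F(S\cap T)-F(S\setminus T)-F(T\setminus S),\qquad F(U):=\sum_{i,j\in U}a_i b_j M_{ij},
$$
which exhibits any mismatch $S\ne T$ as incurring the extra penalty $F(S\setminus T)+F(T\setminus S)$ relative to the paired configurations. The delicate point is the \emph{unequal-weight} case $a_i\ne b_i$: there the two first-order conditions carry the distinct weightings $b_j$ and $a_i$, the symmetric identity no longer telescopes, and one must combine the negative-diagonal pressure with the zero-sum constraint $\sum_{i,j}a_ib_jM_{ij}=0$ (or invoke the concentration $a_i\approx b_i\approx nP_i/2$) to exclude split blocks, since \eqref{condition} leaves the off-diagonal signs of $M_{ij}$ uncontrolled.

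Finally, the proper-subset claim follows once pairing is known. The all-in configuration gives $f(\{1,\dots,k\},\{1,\dots,k\})=\sum_{i,j}a_ib_jM_{ij}=0$, whereas any single paired community gives $f(\{i\},\{i\})=a_ib_iM_{ii}<0$ by \eqref{condition}. Hence the minimum value is strictly negative, the full configuration cannot be optimal, and the minimizing $S=T$ is a proper subset of $\{1,\dots,k\}$; equivalently there is an index $i$ with $(x_i,y_i)=(0,0)$, as required by Conjecture \ref{conjecture}.
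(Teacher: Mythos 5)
Your overall strategy is sound and reaches the same three conclusions as the paper (optimum at a box vertex, paired coordinates, proper subset), but by a genuinely different route. The paper works with the continuous relaxation and argues geometrically: an interior stationary point would force $L_1=\sum_i x_i\,\partial_{x_i}L_1=0$, which cannot be the minimum since $L_1(A_1,B_1)=a_1b_1(d-D_{1,1})<0$ by Condition~\ref{condition}; it then excludes non-corner boundary points by examining the direction of the gradient (including a separate, rather delicate case where the gradient is perpendicular to a face). Your partial-minimization argument --- for fixed $y$ the map $x\mapsto x^{\top}My$ is linear, hence vertex-attained, and $y\mapsto\min_x x^{\top}My$ is a concave piecewise-linear function, hence also vertex-attained --- gets to the corner conclusion more cleanly and avoids the paper's case analysis entirely. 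You also make explicit something the paper leaves implicit: since $d$ is the community-weighted average density, $L_1(A,B)=\sum_{i,j}a_ib_jM_{ij}=0$, while a single paired community already gives a strictly negative value, so the full configuration cannot be optimal and some index must have $(x_i,y_i)=(0,0)$. The paper's proof never actually argues this last clause of Proposition~\ref{proposition}, so your version is more complete on that point.

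The one step neither argument closes is the pairing claim $S=T$ (equivalently $x_i=a_i\iff y_i=b_i$). The paper dismisses it with ``easy to see \dots due to $a_i\approx b_i$ and Condition~\ref{condition}''; your local-exchange analysis is more explicit about why it is hard: the two first-order conditions $\sum_{j\in T}b_jM_{ij}\le 0$ and $\sum_{j\in S\setminus\{i\}}a_jM_{ij}\ge a_i\abs{M_{ii}}>0$ run over \emph{different} index sets with \emph{different} weights, so they do not directly contradict each other, and Condition~\ref{condition} controls only the diagonal of $M$. Your inclusion--exclusion identity for the balanced case is correct but does not by itself force $S=T$, since $F(S\setminus T)$ and $F(T\setminus S)$ need not be nonpositive. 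This is a genuine gap --- but it is the same gap the paper's own sketch papers over, and you flag it honestly rather than asserting it, which is the right thing to do in a proof labelled as a sketch.
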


\begin{proof} (A sketch)
Let us consider a relaxation of the optimization problem in which the integer variables $x_i$ and $y_i$ are replaced by their continuous counterparts, keeping the box constraints. We use the same symbols $x_i$ and $y_i$ to correspond to the global minimum of $L_1$ within the box constraints and in the continuous variables. Denote 
$$
 L_1(X^*,Y^*) = \sum_{i,j} x_i m_{i,j} y_j.
$$
There is no global optimum strictly inside the box. To see why, denote the partial derivatives of $L_1$ by $f_i := \partial_{x_i}L_1(X^*,Y^*)$ and $g_i: = \partial_{y_i} L_1(X^*,Y^*)$. A global optimum inside the box would lead to $$L_1(X^*,Y^*) = \sum_i f_i x_i=0,$$ which is not the global minimum since $L_1$ can take negative values, say, when we take just one community, $L_1(A_1,B_1) = a_1(d -  D_{1,1})b_1<0 $ by Condition \ref{condition}. That is why the optimal point is on the boundary of the box. It must also be in the 'corners' of the box, meaning that components have values $0$ or have the maximal possible value. In a point that is on the box boundary but not at a corner point, the gradient of $L_1$ is pointing inside the box volume and by moving towards some direction, provided the gradient is not perpendicular to the boundary, one could reduce the value of $L_1$, this is impossible only if the point is in one of the corners of the box. If the gradient is perpendicular to boundary of the box, we would have $(\nabla_x L_1)_i= c \delta_{\alpha,i}$ and $(\nabla_y L_1)_i= c' \delta_{\beta,i}$ and $L_1= (x,\nabla_x L_1)=x_\alpha c \sum_i(d-D_{\alpha,i})y_i= y_\beta c'\sum_i(d-D_{\beta,i})x_i$ in that point. As a result $c=c'$ and $\alpha = \beta$ and $L_1=x_\alpha y_\beta(d-D_{\alpha,\alpha})\geq n_\alpha m_\alpha(d-D_{\alpha,\alpha})$. The found lower bound corresponds to a choice of a corner point and thus such a solution has lower energy than the suggested orthogonal to the boundary of the box. It is also easy to see that if $x_i=a_i$, then also $y_i=b_i$. This is due to $a_i \approx b_i$ and Condition \ref{condition}. As a result, the global optimum is just one of the corner points. The corner points of the box have integer coordinates and as a result the found minimum is a solution of the original integer problem. 
\end{proof}

To prove Conjecture \ref{conjecture}, we need some probability concentration inequalities like Chernoff bounds for known distributions with which we are dealing. The martingale argument may be used as was shown in an analogous case \cite{reittuData}. Proposition \ref{proposition} shows that the Conjecture holds on average and provides a starting point for the proof. Then one should use concentration inequalities of probability theory to show that the solution of stochastic problem is the same with high probability.   

After the first step, the algorithm proceeds similarly on the found subgraph corresponding to  $\argmin L$. It runs until only one community is left. Then the found community is deleted from $G$ and the whole process is repeated until all communities are found, see Fig \ref{split}.

\section{Simulations and experiments with D-Wave machine}
\subsection{Brief description of the D-Wave Leap system }

D-Wave Systems Inc. has published a quantum computing cloud service, Leap \cite{leap}, for free trial and an option for buying quantum processing time. The Leap provides immediate access to a D-Wave 2000Q quantum computer or annealer. The computer has up to $2048$ qubits and the service  provides support for users such as the demos, interactive learning material and the Ocean software development kit (SDK) with suite of open-source Python tools and templates.

When one has a qubo (\ref{qubo}) in the form of (\ref{qubo2}), it is quite straightforward to implement and run it on the quantum computer. The size of the problem is restricted by the connectivity between qubits in the D-wave 2000Q Chimera architecture and the number of qubits available. 

For an arbitrary problem (\ref{qubo}) an embedding is needed and that can drastically reduce the size of the problem that can be solved. For large problems, a hybrid approach is suggested, in which the problem is split into smaller pieces and part of the computations are done classically, so-called qbsolver. The Ocean SDK provides functions for automatic embedding as well as for solving a large qubo by qbsolver. D-Wave has announced that the next generation version of the quantum computer with over $5000$ qubits and added connectivity between qubits would be available in mid-2020.  

\subsection{Experiments}

\subsubsection{Regularity check of a cortical area graph}\label{1}
Our first example is a small bipartite graph in Fig. 4, with $18$ nodes taken from \cite{nepusz2008}, in which SRL was used to analyse connections between cortical areas in a brain of a primate. This graph represents one regular pair of a bigger graph. 
 \begin{figure}
 
\includegraphics[scale=0.3
]{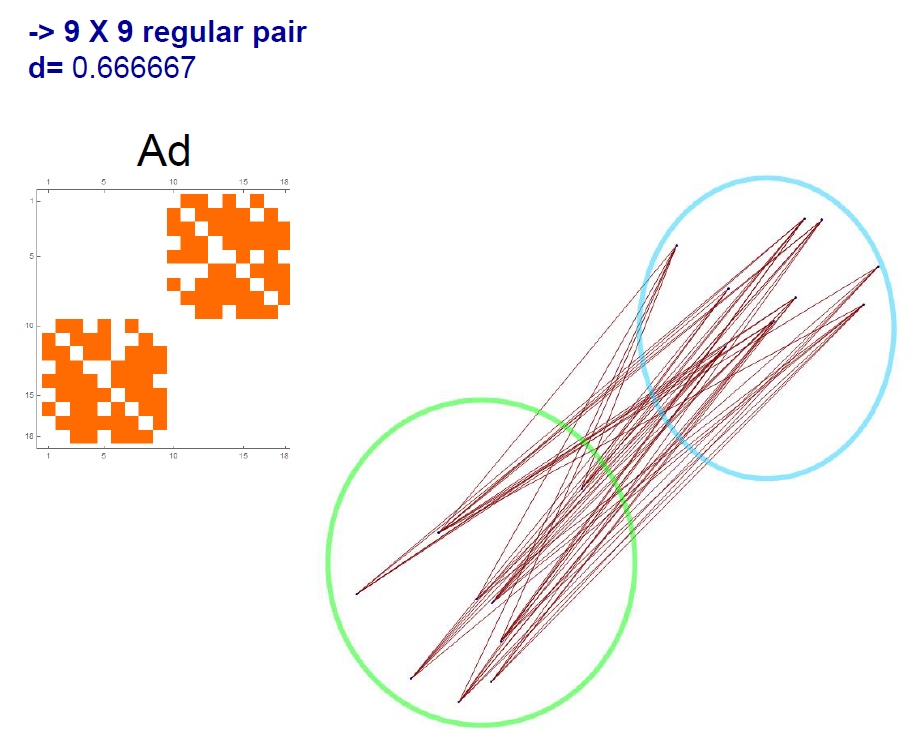}
\caption{Adjacency matrix (Ad) and the bipartite graph}
\end{figure} 

Our task is to find a solution of the qubo (\ref{qubo}) for this graph which is equivalent to the regularity check. In this case, the number of possible subset pairs is just $2^{18}=262144$. As a result, the full search is possible. D-Wave finds the solution in a default time (microsecond). 

{\it The Result:} D-Wave finds the exact solution in one run taking microsecond. 

\subsubsection{Regularity check of random bipartite graphs} Next we solved qubo (\ref{qubo}) for a random bipartite graph. In first experiments both segments have $50$ nodes. The links between node pairs were drawn independently at random with a fixed probability $=0.2$. 

In this case, it would be expensive to find the global minimum by exhaustive search. However, using the mixed-integer linear programming (MILP) solver CPLEX 12.9 with the model described in \cite{dash}, we established that the found solution is the global minimum.\\  Interestingly, McGeoch and Wang \cite{mcgeochWang} compared the CPLEX and the D-Wave machine Vesuvius 5 with $439$ qubits on randomly generated Ising qubo instances. The Ising model was generated on a Chimera subgraph, so the  $J$-matrix in qubo (\ref{Qubo}) was a weighted adjacency matrix of the Chimera subgraph. Such a problem is a sparse one. Dash \cite{dash} noted that with a suitable MILP model, CPLEX found the solution to the McGeoch and Wang instances very quickly and in a comparable time with D-Wave: For example with $512$ nodes, the average time was $0.19$s. 

D-Wave was used in hybrid quantum-classical mode when a part of the problem was solved on an ordinary computer and only smaller sub-problems are solved on D-Wave, so called qbsolver algorithm. This allows treating optimization problems with much larger number of variables than the number of qbits available in D-wave. 

It appears that our case of a dense bipartite graph is much harder than the problems that McGeoch, Wang and Dash studied. In the described $50\times 50$ node graph the required time to find the optimal solution was $4.5$ hours and to verify that the solution was the global minimum, took an additional $16$ hours. The computer had four 2.7 GHz Xeon E5-4650 CPUs, with a total of $32$ cores and $512$ GB RAM.
 
{\it The Result:} Both D-Wave and simulated annealing algorithm produce the same least energy solution when around $1000$ instances were examined. This solution is the global minimum found with CPLEX. The required time at D-wave is again very small, less than a second, if the queuing time to the Leap cloud service is neglected. Simulated annealing needed around one minute. 

The sample graph had a density around $0.2044$ and the found subgraph that is the solution of qubo (\ref{Qubo}) had density around $0.313$.  The corresponding  non-zero blocks of adjacency matrices are shown in Figures 5-6. 

\begin{figure}
 \label{full}
 \begin{center}
  \begin{minipage}[b]{0.4\textwidth}
 
    \includegraphics[width=\textwidth]{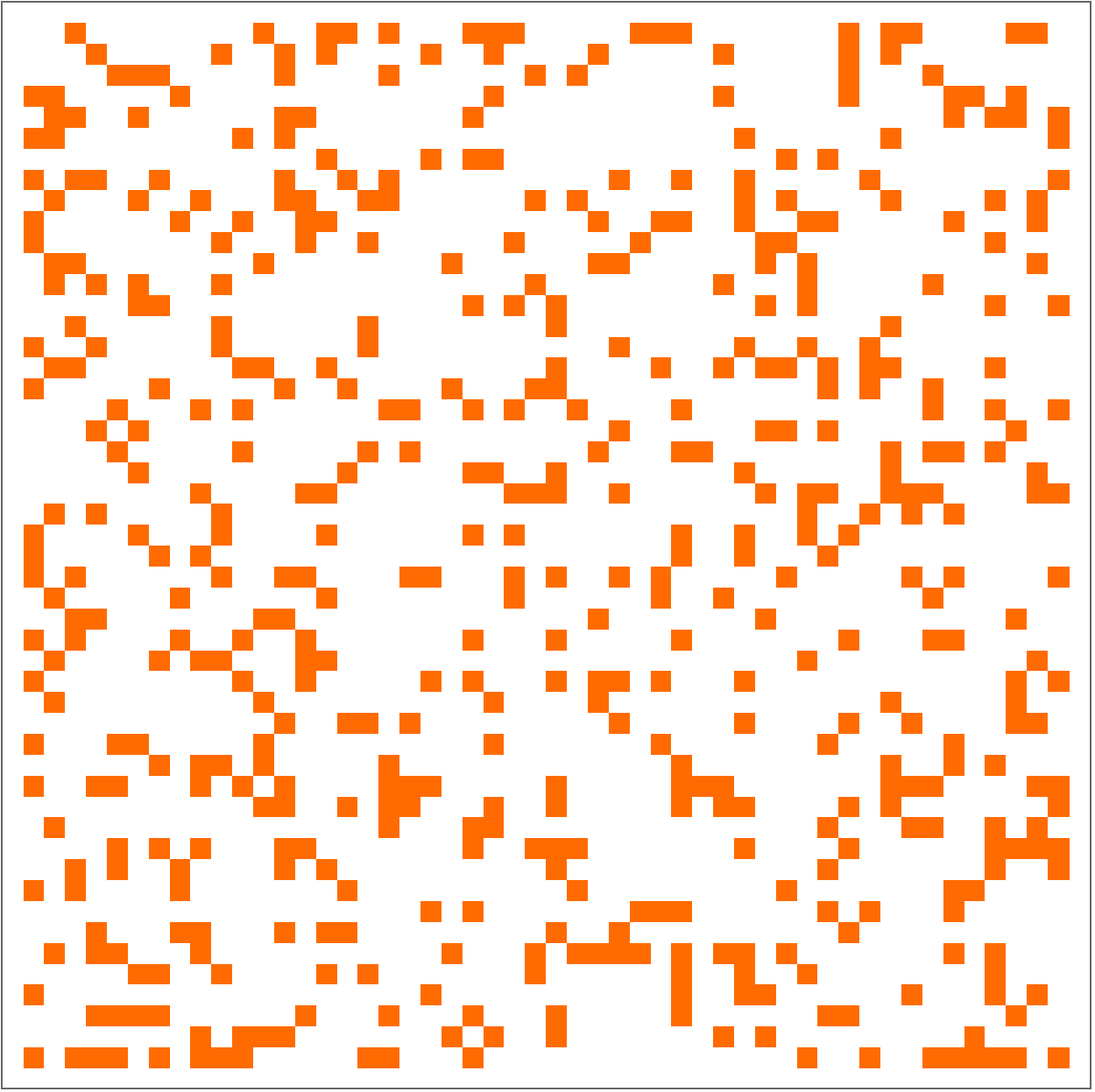}
    
    \caption{Adjacency matrix of a bipartite graph}
     \vspace{0.5cm}

  \end{minipage}
  \hfill
  
  \begin{minipage}[b]{0.2\textwidth}
  
    \includegraphics[width=\textwidth]{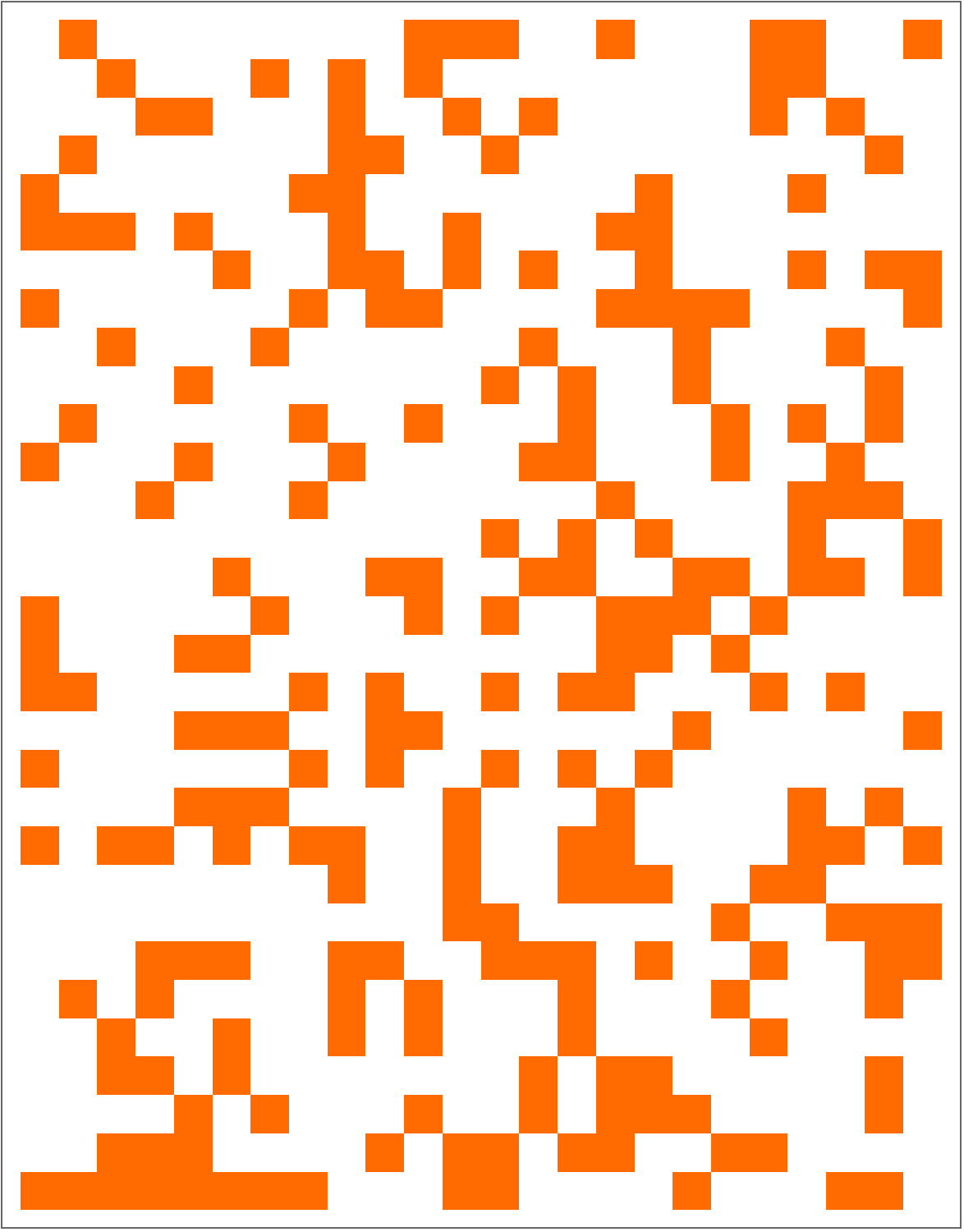}
    
    \caption{Adjacency matrix of largest irregularity pair of size $31\times 24$ of the bipartite graph in Fig. 5}
  \end{minipage}
  \end{center}
\end{figure}

\subsubsection{Execution times on bipartite graphs}
 We tested solving the qubo (\ref{Qubo}), using large random bipartite graphs. For instance, in the case of $400$ nodes it is not possible to use D-Wave in a similar way as in the case of $100$ nodes. Instead, we used the simulated annealing algorithm, qbsolve with D-Wave or simulated annealing provided by the Leap system. The last two methods means that the problem is split into smaller pieces and the pieces are solved by classical or quantum annealing. In this way lager problems can be solved on D-Wave machine.  
 
 For such scales of random dense graphs with several hundreds of nodes CPLEX becomes also impractical, due to long execution time. Simulated annealing also slows down. For $2000$ nodes, the time to find approximate solution is several hours on a laptop. In this case, it is not possible to verify with CPLEX whether a global minimum was found. As a results, such execution times are only lower bounds of the optimization time.
 
  \begin{figure}
  \begin{center}
 \label{fulop}
\includegraphics[scale=0.4
]{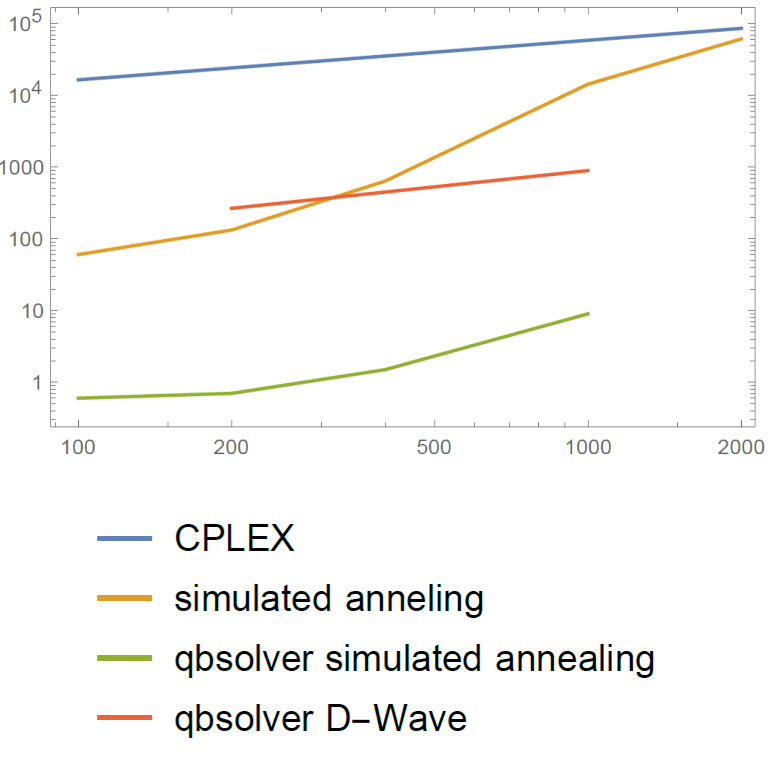}
\caption{Execution time in seconds of different optimization methods as a function of graph size in log-log scale. CPLEX can find exact solutions for small graphs and a poor solution for large graphs. In case of qbsolver with D-Wave annealer, the queuing time for cloud service is not filtered away. }
\end{center}
\end{figure} 

The result is shown in Fig. 7. The heuristical qbsolver classical annealer is the quickest, however producing slightly lower quality solutions for large graphs than the usual simulated annealer. The D-Wave qbsolver shows almost a constant time and, in this case, the queuing time is not filtered away. This may suggest that for extremely large cases, D-wave-assisted qbsolver is the winner in terms of time and quality. CPLEX can find exact solutions for small scales, but it is very slow and produces poor quality solutions for large graphs.  

We hypothesize that the qubo (\ref{Qubo}) for regularity check of a random bipartite graph can be a hard problem to solve already for moderate sizes of the underlying graph and can be used for testing quantum annealers. On the other hand, it suggests even some simple optimization problems emerging from large data can be only solved exactly with future quantum computers. 
\subsubsection{Small scale community detection}

We tested our community detection algorithm using D-Wave and simulated annealing for a bipartite graph with $100$ nodes and two communities. The adjacency matrix and the graph are shown in Figs. 8-9

\begin{figure}
\begin{center}

\includegraphics[scale=0.4]{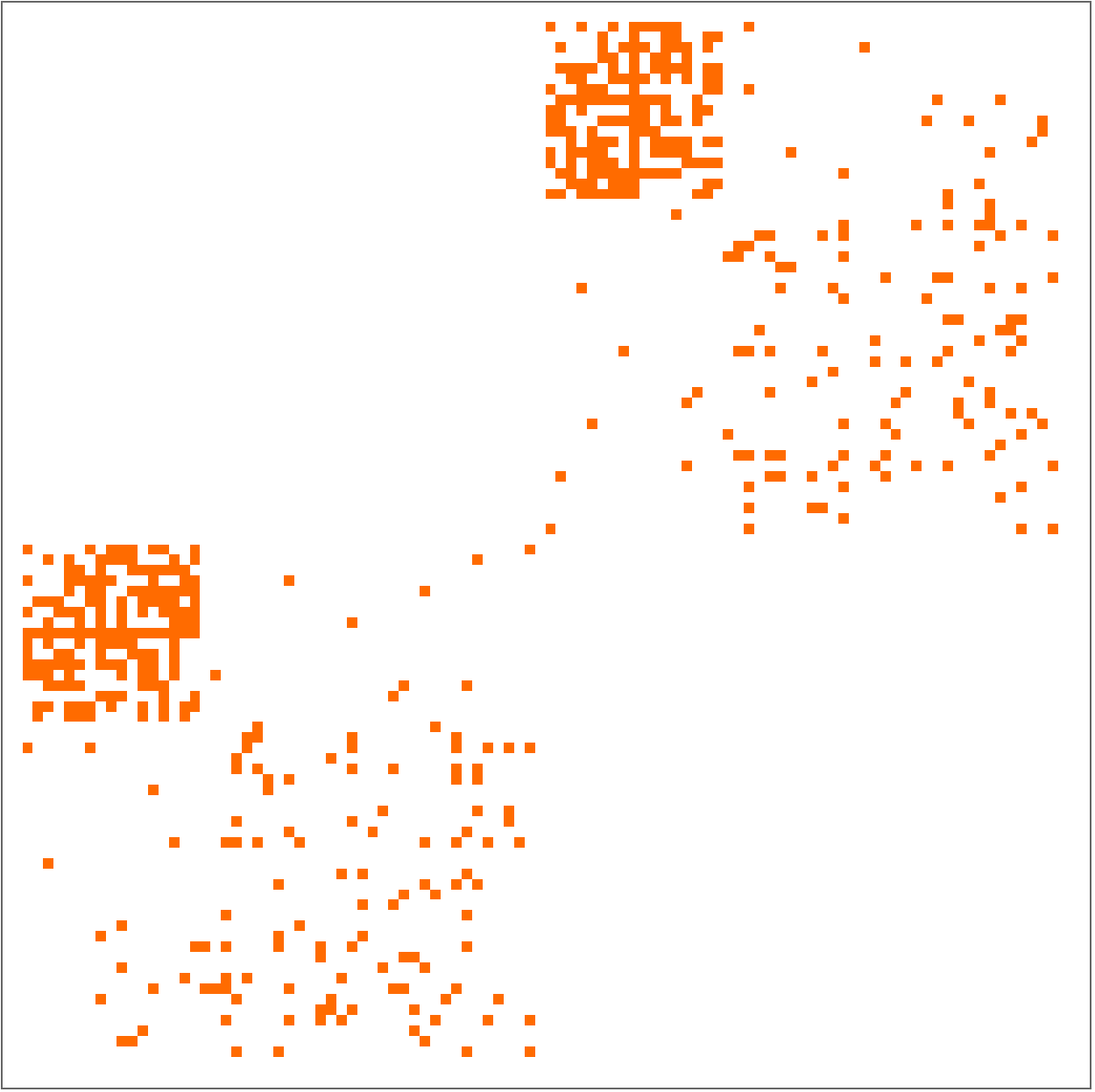}
\caption{Adjacency matrix of a $100$ node bipartite graph with two communities, seen as blocks of higher concentration of $1$s}
\end{center}
\end{figure} 

\begin{figure}
\begin{center}
\includegraphics[scale=0.17,angle=90]{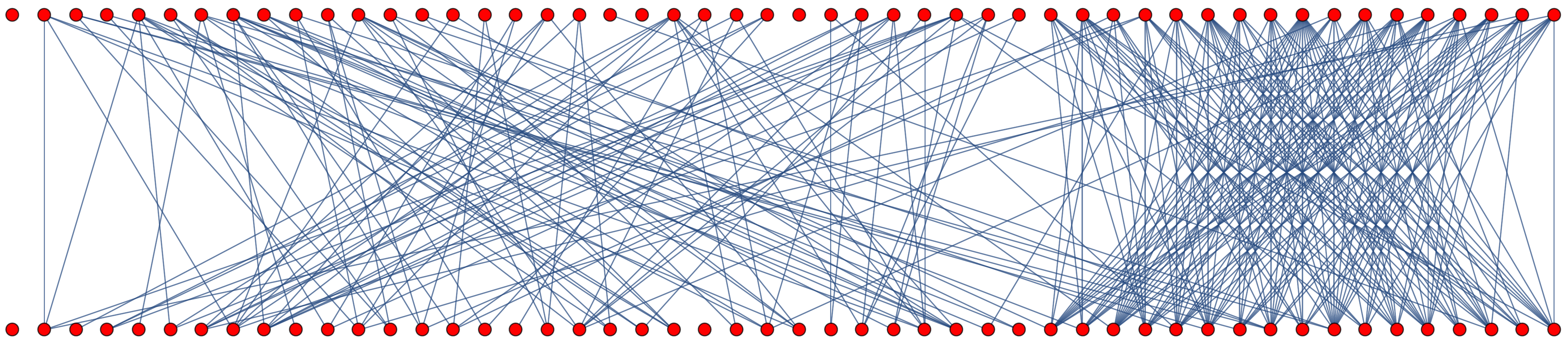}
\caption{Bipartite graph with two communities, the denser community is at the right end of the plot}
\end{center}
\end{figure}

{\it The Result:} both quantum - and classical annealers found the correct communities with 100 percent accuracy. 

Next experiment was done using only classical annealer because the graph had $200$ nodes which cannot be embedded directly in the Chimera graph of D-Wave. The graph has three communities, see Fig. 10. In the first round, the largest and sparsest community was dropped out. At the second round, one of the communities was left alone. {\it As a result} the algorithm found all three communities perfectly. 
 \begin{figure}
\begin{center}
\includegraphics[scale=0.2
]{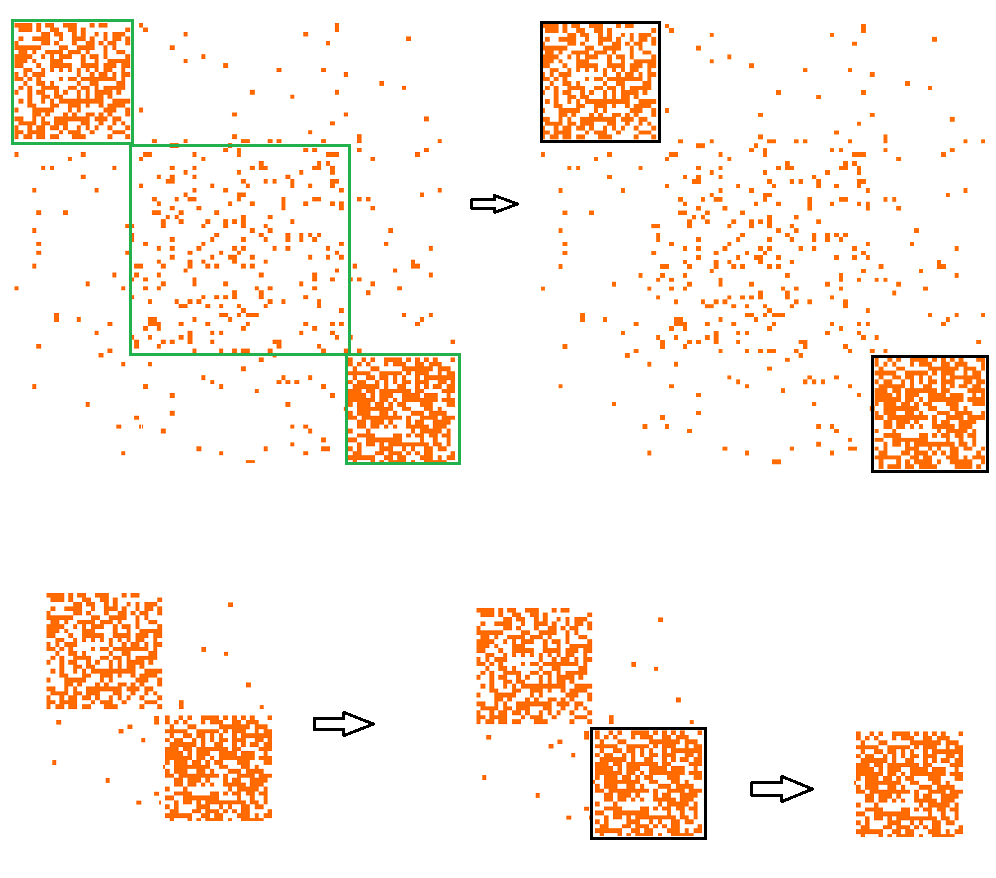}
\caption{From upper left corner: $3$ communities with internal links inside green boxes. The first round of iterations selects two most dense communities shown in black boxes. The next round picks one of the communities as an output. Result: perfect detection of all $3$ communities}
\end{center}
\end{figure} 
\subsubsection{Towards large scale community detection}

We consider a larger case of graph that has $2000$ nodes and $10$ communities. The adjacency matrix of the split bipartite graph is shown at the top of Fig. 11. The diagonal blocks of the two non-zero large blocks corresponds to links inside the communities. The darker color indicate higher density of links. In this case, the algorithm works as stated in Conjecture \ref{conjecture}, the output is one community.    

\begin{figure}
\begin{center}
\includegraphics[scale=0.4
]{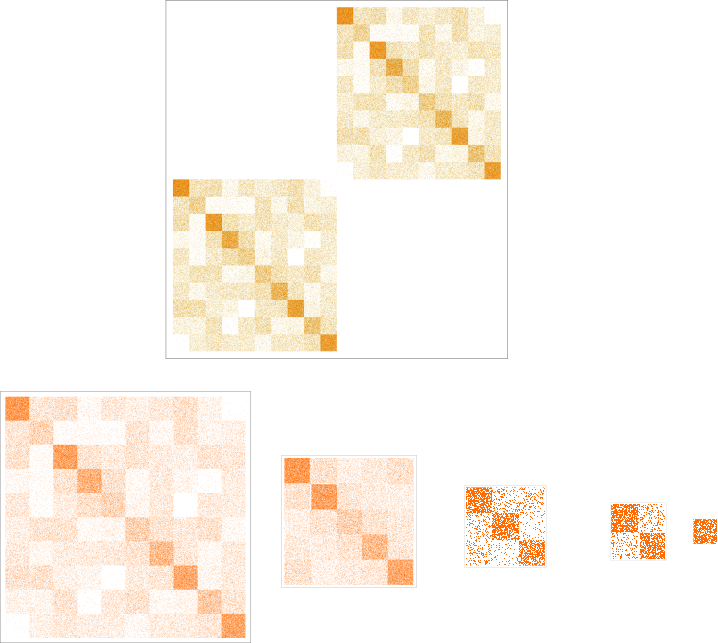}
\caption{Community detection with a classical annealer for a graph with $10$ communities and $2000$ nodes. Top: adjacency graph of the bipartite graph with communities. Lower row, from left to right, the stages of community elimination. The number of communities at different stages are $10$, $5$, $3$, $2$ and $1$. The last remaining community is the densest one.  }
\end{center}
\end{figure} 

\begin{figure}
\begin{center}
\includegraphics[scale=0.65
]{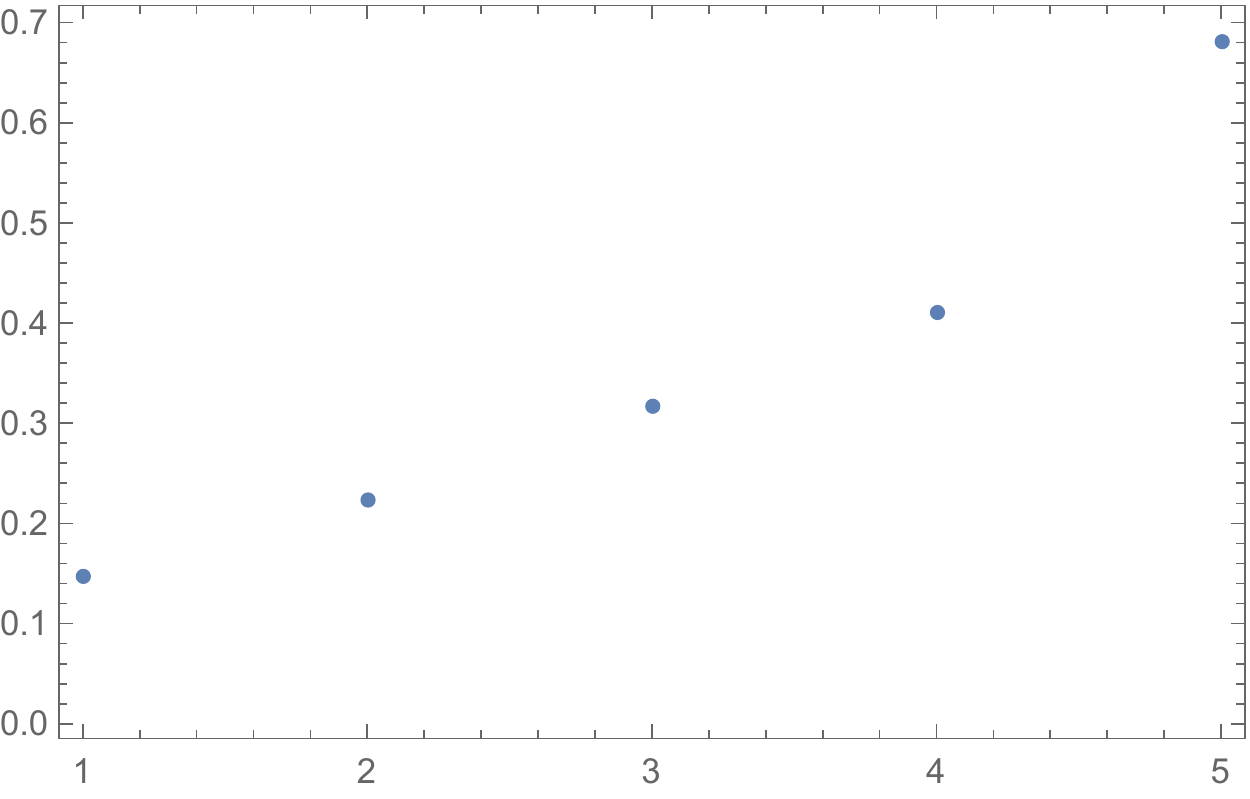}
\caption{Density of the subgraphs at stages $1-5$ of the community detection algorithm for the case shown in Fig. 11. The left-most dot indicates the density of the original graph $=0.146\cdots$. The density shows a steady growth until only one community is left, at step $5$. In our example, the last community is the densest of them, with a density of $0.680\cdots$. This circumstance could be used as a simple criteria of stopping the algorithm, since after one community is left a further increase of the density is expected to be very small.    }
\end{center}
\end{figure} 

\section{Algorithms}

We call our graph community detection algorithm as community panning. In this section we further scrutinize its details. The logical structure is given in the following Algorithm 1. 

The algorithm starts from a uniformly at random bi-partitioning of the input graph. The follows the steps of finding maximally dense subgraph in the sense of regularity check, as described in previous sections. Obviously there is a problem of stopping. We suggest to use the cost function divided by the product of sizes of bi-partitions. This can be called energy per node. We claim that such a function has minimum at the right step of the algorithm. In our experiments this suggestion works well, see Fig. 13.     
\begin{algorithm}
\caption{Community panning algorithm}\label{panning}
\begin{algorithmic}[1]
\Procedure{Find graph communities}{$G$}\Comment{Graph G}
\State Read adjacency matrix $A$ of $G$
\State Divide nodes ($V$) of $G$ in two random sets $V1$ and $V2$
\State Find non-zero block $B(V1,V2)$ of the adjacency matrix of the bipartite graph induced by $V1$ and $V2$ 
\State set $e1=0,\quad e2=0$
\While{$e1>e2$}\Comment{Stop at the minimum of energy per node}
\State $e1=e2$
\State Find non-zero block $B(V1,V2)$ of the adjacency matrix of the bipartite graph induced by $V1$ and $V2$, find link density of $B$, $d$ 
\State define qubo: $bqm=\sum_{i\in V1,j\in V2}s_i(d-B_{i,j})s_j$
\State call D-Wave to find $\argmin_{s_i,s_j}(bqm)=L$, $s_i\in\{0,1\}$
\State  $e=bqm(L)$
\State $n_1=\abs{V1},\quad n_2=\abs{V2}$, $L=L(s_1,\cdots, s_{n_1+n_2})$
\State $V1'=V1\quad V2'=V2$
\State update: $V1=\{i: 1\leq i\leq n_1, s_i=1\}$
\State update: $V2=\{i: n_1+1\leq i\leq n_1+n_2, s_i=1\}$
\State $n_1=\abs{V1},\quad n_2=\abs{V2}$
\State $e2=e/n_1/n_2$
\EndWhile\label{euclidendwhile}
\State \textbf{return} $V1'$ and $V2'$\Comment{$V1'\cup V2'$ is the list of nodes in the community}
\EndProcedure
\end{algorithmic}
\end{algorithm}

\begin{figure}
\begin{center}
\includegraphics[scale=0.4
]{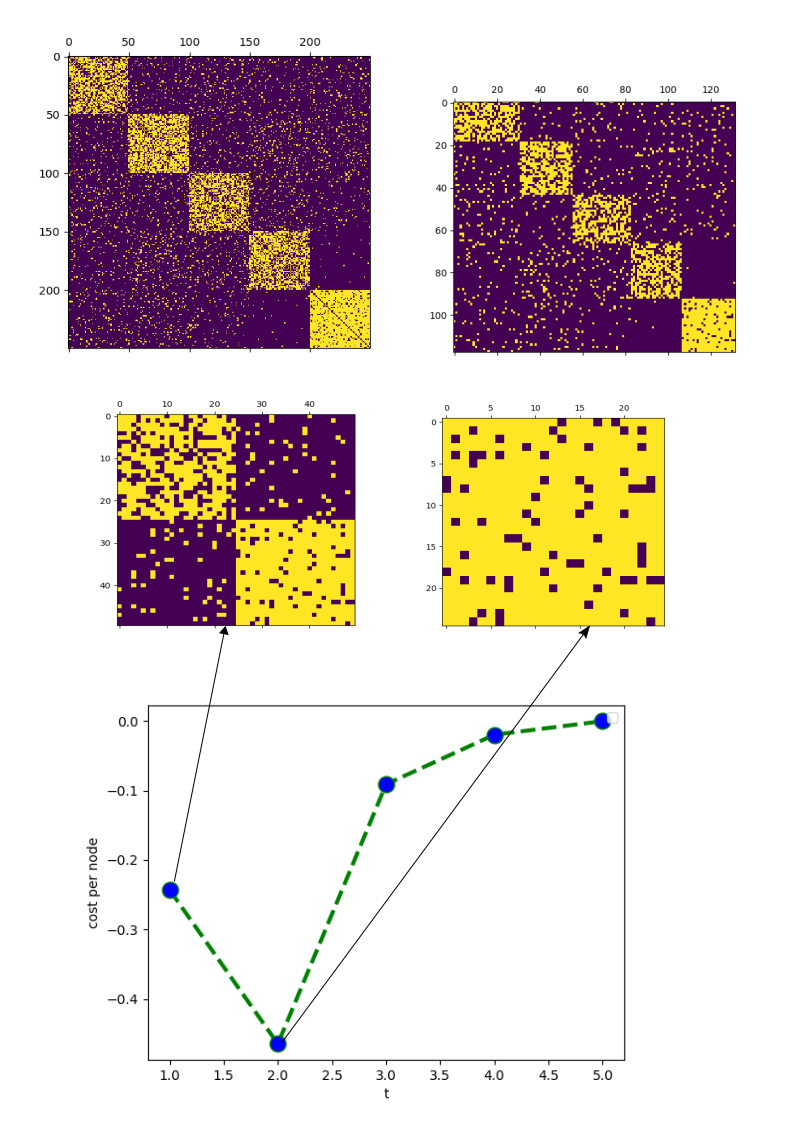}
\caption{The result of using community panning for a graph. Top-left adjacency matrix of the graph with communities. Yellow colour indicates value $1$  and dark colour $0$. The nodes are ordered in such a way that communities are apparent as diagonal blocks. From the adjacency matrix a random bipartite graph is chosen. Its non-zero block of the adjacency matrix is at the top right. First round of the algorithm chooses the corresponding subgraph with adjacency block shown at the middle left. Apparently it contains just two communities. Similarly the next step finds a community with the adjacency block in the middle right. It coincides perfectly with  one of the original communities, corresponding to the most yellow diagonal block in the whole graph in the top left figure. At the bottom is the plot of energy per node function at each stage. The step $t=2$ has minimum of this function. The subgraph corresponding to the minimum of the energy per node is the solution yielding one community. The other points in the energy plot show larger energy per node if the number of steps is too small or too large.       }
\end{center}
\end{figure} 

We made Python version of the corresponding algorithm available at the GitHub, \cite{compannning}. It can be used with D-Wave or without it implementing a version with classical annealing. The latter version is quite quick for moderate size graphs like the one in Fig. 13, with $250$ and five communities. 

Next we implemented an algorithm that loops the first algorithm to find all communities. In this case, there is a problem how to stop algorithm or in other words, how to decide when all communities are found. We suggest to use adjacency matrix visualisation or cost function plotting as way, see Fig. 14 for details. 
\begin{figure}
\begin{center}
\includegraphics[scale=0.4
]{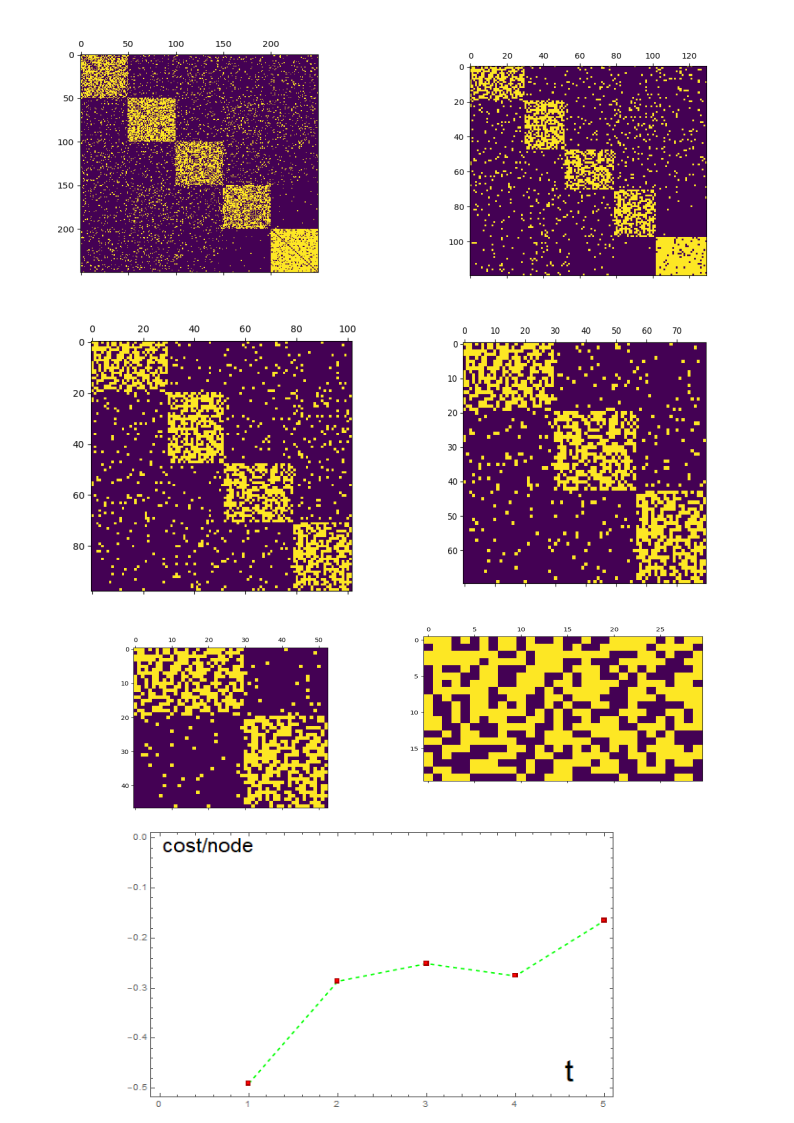}
\caption{The result of using community panning for a graph to find all communities. The graph is the same as in Fig.13. Top left matrix is the input adjacency matrix, the same as in Fig. 13. From top right to bottom right are the adjacency matrices of non-zero blocks at each stage. At the top right is the non-zero block of randomly bipartized  graph with five communities. At each stage one community is found and the corresponding nodes are deleted from the graph and the resulting graph is used as an input to the next stage. The bottom diagram shows energy per node of each solution. Parameter $t$ indicates number of iterations done. $t=5$, corresponds to stage when the algorithm is applied to a graph where only one community is left and no further are to be found. The cost function shows a gap, detection of which might be used as a stopping criteria. Another option could be use of adjacency matrix visualisation. In our plotting when we use the ordering of nodes according to community membership, it is clear that after four steps no further communities are to be found. In practise, this would mean that after each step, nodes are ordered in blocks and the whole adjacency matrix is plotted. The found communities should be visible as diagonal blocks similar to those in our plot.          }
\end{center}
\end{figure} 
\section{Conclusion}

Large graphs emerge from big data analysis and can pose serious computational problems. Szemer\'edi's Regularity Lemma (SRL) is a fundamental tool in large graph analysis and thus can become important also in the big data area.  

Quantum computing is a new emergent area in computation and can contribute in both areas.

We demonstrated connections between SRL, graph community detection and quantum annealing. SRL contains a very hard problem, regularity check, that could be solved on a quantum annealer, which we demonstrated using D-Wave quantum annealer. In case of graph community detection, we conjecture that quantum annealers of the future can produce high quality solutions for large scale problems.   

Research and business are already joining efforts in quantum computing and addressing big data. In $2019$, IT giant Google and German research center Forschungszentrum J\"ulich  announced a research partnership to develop quantum computing technology \cite{google}. 
\section*{Acknowledgment}
We would like to thank Dr. F\"ul\"op Bazs\'o for kindly sharing with us the cortical network data \cite{nepusz2008}. D-Wave Systems is acknowledged for providing us a free trial computation time in Leap cloud service.   This work was supported by BusinessFinland - Real-Time AI-Supported Ore Grade Evaluation for Automated Mining (RAGE) and Quantum Leap in Quantum Control (Qu2Co) projects.
\begin{appendix}
\subsection{Energy spectrum of quantum Hamiltonian}
D-Wave quantum computer uses large number of quantum bits or qbits. In this section our aim is to verify connection between quadratic optimization and the ground state of the corresponding quantum system in D-wave machine. We also review some basic concepts of quantum computation theory needed for this purpose.  
One qbit, a quantum version of a bit or a binary variable is described by a vector  $\ket{Z}\in \mathbb{C}^2$, which is the complex vector space of dimension 2. $\ket{Z}$ are referred as ket-vectors.   
 
Elements  of dual vector space are called bra-vectors and are denoted as  $\bra{Z}\in \mathbb{C}^{2*}$, which forms a complex vector space of dimension 2 

 $\mathbb{C}^{2*}$ is just space of all row vectors with two complex coordinates: $(z_1,z_0)$
 
 Notations:
 \begin{eqnarray*}
 e_1=\left(
\begin{array}{cc}
 1\\
 0
\end{array}
\right):=\ket{1},e_0=\left(
\begin{array}{cc}
 0\\
 1
\end{array}
\right):=\ket{0},\\ e_1^T=(1,0)=\bra{1}, e_0^T=(0,1):=\bra{0},
 \end{eqnarray*}
 where, $T$ stands for matrix transposition. 
 
 Any vector in $\mathbb{C}^2$ or $\mathbb{C}^{2*}$ can be written:
  $Z:= \ket{Z}= z_1\ket{1}+z_0\ket{0}$, $\bra{Z}=\ket{Z}^\dagger=\bra{1}z_1^*+\bra{0}z_0^*$,
  where $\dagger$ stands for transposition and complex conjugate (Hermite transpose).
  Inner product is understood as a matrix multiplication:
 \begin{eqnarray*}
 \bra{Z}\ket{Z'}=(\bra{1}z_1^*+\bra{0}z_0^*)(z_1'\ket{1}+z_0'\ket{0})=\\
 z_1^*z_1'\bra{1}\ket{1}+z_1^*z_0'\bra{1}\ket{0}+z_0^*z_1'\bra{0}\ket{1}+z_0^*z_0'\bra{0}\ket{0}=\\
 z_1^*z_1'+z_0^* z_0',
 \end{eqnarray*}
 where we used orthogonality  conditions $\bra{i}\ket{j}=\delta_{i,j}$, with $i,j\in\{0,1\}$ and $\delta_{i,j}=1$ if $i=j$ and zero otherwise. These conditions follow directly from our definitions. 

 Classical register of length $n$ is a tuple of binary variables:
    $R=(b_1,b_2,\cdots,b_n)$, $b_i\in\{0,1\}$
    
    Quantum register of length $n$ is a system of $n$ qbits. As in one qbit case, states in which all qbits have some particular value form a basis and are denoted as:
    $$
    \ket{R}=\ket{b1}\ket{b_2}\cdots \ket{b_n}=\ket{b_1b_2\dots b_n},
    $$
    We assume that the carriers of qbit states are not identical entities, which is usually the case. In the opposite case the state vectors would have additional bosonic or fermionic symmetry property upon permutations. 
    
    If $n$ qbits are in any state $\ket{R}$ result of measuring of qbits is always register $R$. Such states we call the product states. Qbits can be in any linear combination of such product states with arbitrary complex numbers, which add up to on in square modules,  as coefficients:
    $$
    \ket{\Psi}= \sum_{b_1,\cdots,b_n}z_{b_1,\cdots,b_n}\ket{b_1\cdots b_n}, \sum_{b_1,\cdots,b_n}\abs{z_{b_1,\cdots,b_n}}^2=1
    $$
    
    State space of quantum register with $n$ qbits ($H_n$) is called tensor product of n spaces $\mathbb{C}^2$:
  $$
\ket{\Psi}\in H_n=\mathbb{C}^2\otimes\mathbb{C}^2\otimes\cdots \otimes\mathbb{C}^2=(\mathbb{C}^2)^{\otimes n}.
  $$
  Defining the inner product in $H_n$ as:
  $$
  \bra{\Psi}\ket{\Psi'}= \sum_{b_1,\cdots,b_n\in\{0,1\}}z_{b_1,\cdots,b_n}^* z_{b_1,\cdots,b_n}',
  $$
  $H_n$ is turned into a Hilbert space of dimension $2^n$. 
  
  The interpretation of quantum register state is probabilistic. When qbits are measured, probability of an outcome $(b_1,\cdots, b_n)$ in state $\ket{\Psi}$ is: 
  \begin{axiom}
  $$
  \mathbb{P}((b_1,\cdots, b_n)\ket{\Psi})=\abs{z_{b_1,\cdots,b_n}}^2.
  $$
  \end{axiom}
 A general state in $H_n$ is called entangled state. An entangled state is a superposition of product states. The product states, where each qbit has a definite state, is only a tiny fraction of the $H_n$: 
\begin{proposition}
A state in $H_n$ is described by a $2^{n+1}-2$ dimensional manifold while product states by a $2n$ dimensional real manifold.
\end{proposition}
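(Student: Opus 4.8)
The plan is to recognize both manifolds as standard objects from the geometry of quantum state spaces and to count their real dimensions directly. First I would fix the ambient space: a vector $\ket{\Psi}\in H_n$ lives in $\mathbb{C}^{2^n}$, which has real dimension $2\cdot 2^n = 2^{n+1}$. A physical state, however, is unchanged under multiplication by a nonzero complex scalar, so the space of states is the set of complex rays, that is, the complex projective space $\mathbb{CP}^{2^n-1}$. To obtain its real dimension concretely I would impose the two constraints that realize this quotient: normalization $\sum_{b}\abs{z_b}^2 = 1$ cuts the sphere $S^{2^{n+1}-1}$ out of $\mathbb{R}^{2^{n+1}}$, removing one real dimension, and quotienting by the residual global phase $e^{i\theta}\in U(1)$ removes one more. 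This leaves $2^{n+1}-1-1 = 2^{n+1}-2$, establishing the first claim.

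For the product states I would use the Segre parametrization $(\ket{\psi_1},\dots,\ket{\psi_n}) \mapsto \ket{\psi_1}\otimes\cdots\otimes\ket{\psi_n}$. Since rescaling any factor $\ket{\psi_i}\mapsto\lambda_i\ket{\psi_i}$ merely rescales the tensor product, each factor contributes only its class in $\mathbb{CP}^1\cong S^2$, a real $2$-dimensional manifold, and the assignment descends to a well-defined Segre map $(\mathbb{CP}^1)^n \to \mathbb{CP}^{2^n-1}$. Its domain has real dimension $2n$, which yields the second claim once the map is shown not to collapse that dimension.

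The step that genuinely requires care, and which I expect to be the main obstacle, is verifying that this parametrization is a bona fide $2n$-dimensional embedded submanifold rather than something that degenerates or self-intersects. Concretely I would check that the Segre map is an injective immersion: injectivity amounts to the standard fact that a decomposable tensor determines its factors uniquely up to scalars, and the immersion property follows from showing that the differential has full rank $2n$ at every point, equivalently that the variety $(\mathbb{CP}^1)^n\hookrightarrow\mathbb{CP}^{2^n-1}$ is smooth. With injectivity and constancy of rank in hand, the image is a smooth submanifold of real dimension $2n$, strictly smaller than $2^{n+1}-2$ for all $n\ge 2$; this gap is precisely the quantitative assertion that entanglement is generic. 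I would close by remarking that the base case $n=1$ gives $2n = 2 = 2^{n+1}-2$, consistent with the fact that every single-qubit state is trivially a product state.
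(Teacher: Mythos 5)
Your proposal is correct and follows essentially the same route as the paper: both arguments count $2\cdot 2^n$ real coordinates, subtract one for normalization and one for global phase to get $2^{n+1}-2$, and obtain $2n$ for product states by counting each tensor factor as a point of a two-real-dimensional space ($\mathbb{CP}^1$ in your language, a normalized two-component spinor modulo phase in the paper's). The only substantive addition on your side is the verification that the Segre parametrization is an injective immersion, a non-degeneracy check the paper's bare parameter count silently assumes.
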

\begin{proof}
By dimension of a manifold we mean just number of real parameters needed to uniquely define a quantum state. A general entangled state in $H_n$ has $2^n$ complex coordinates, each coordinate needs two real parameters. This gives $2\times 2^n=2^{n+1}$ real parameters. The normalization condition reduces number of parameters by one. The states $\ket{\psi}$ and $e^{i\phi}\ket{\psi}$, describe the same quantum states, sharing a common phase factor $e^{i\phi}$. This reduces number of parameters by one. The result is as claimed $2^{n+1}-2$. A product state can be written as: $(c_1\ket{0}+c'_1\ket{1})\otimes((c_2\ket{0}+c'_2\ket{1})\otimes\cdots \otimes (c_n\ket{0}+c'_n\ket{1})$, each factor $(c_i\ket{0}+c'_i\ket{1})$ has two complex parameters and one normalization condition $\abs{c_i}^2+\abs{c'_i}^2=1$, this result in $3$ parameters.  The common phase factor reduces the number of parameters to $2$. The whole state is thus described by $2n$ real parameters.
\end{proof}
Take $n=333$, then a state of $H_n$ is described by more than $10^{100}$ real parameters. Such a manifold is impossible to 'digitize', at least in our universe! 

Operators in $n$-qbit Hilbert space, $H_n=(\mathbb{C}^2)^{\otimes n}$, are in general $2^n\times2^n$ matrices.  One way of defining such operators is a tensor product of operators. Let $X_1$ and $X_2$ be operators acting in space of the first - and the second qbit. The tensor product of these operators acts on product states according to rule:
$$
(X_1\otimes X_2)(\ket{\psi_1}\otimes \ket{\psi_2}):= X_1\ket{\psi_1}\otimes X_2\ket{\psi_2}.
$$
The $X_1\otimes X_2$ is defined in the whole $H_2$ by linearity. Generalization to a case of arbitrary number of factors $n$ is done similarly. 

Quantum annealing system in D-Wave architecture has Hamiltonian:
$$
H_{DW}=\sum_i h_i\sigma_{zi}+\sum_{1\leq i< j\leq n}J_{ij}\sigma_{zi}\otimes \sigma_{zj},
$$
in which $\sigma_{zi}$ is a Pauli matrix $\sigma_z$ acting on qbit $i$, $h_i$ and $J_{i,j}$ are adjustable real parameters. This is a short hand notation, unit matrix $I$ is assumed in the tensor product not occupied by $\sigma_z$, for instance $h_1\sigma_{z1}=h_1\sigma_{z1}\otimes I_2\otimes\cdots\otimes I_n$ and so on.
     \begin{proposition}
    The spectral decomposition of $H_{DW}$ is
    $$
    H_{DW}= \sum_{b_i\in\{0,1\}, i=1,2,\cdots, n} \ket{b_1,b_2,\cdots,b_n}\bra{b_1,b_2,\cdots,b_n}E_I,
    $$
    In which the Ising energy $E_I(b_1,\cdots,b_n)= \sum_ih_is_i+\sum_{1\leq i<j\leq n}J_{i,j}s_is_j$ and in which $s_i=2b_i-1$.
    \end{proposition}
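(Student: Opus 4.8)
The plan is to diagonalize $H_{DW}$ directly in the computational (product) basis, exploiting the fact that every term of the Hamiltonian is built from $\sigma_z$ operators, which are already diagonal in that basis. There is no need to compute eigenvectors from scratch: the product states $\ket{b_1 \cdots b_n}$ are manifestly the eigenvectors, and the whole task reduces to reading off their eigenvalues.

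First I would record the single-qubit eigenrelation. From the matrix form $\sigma_z = \begin{pmatrix} 1 & 0 \\ 0 & -1 \end{pmatrix}$ together with the basis vectors $\ket{1} = e_1$ and $\ket{0} = e_0$ fixed in the previous section, one reads off $\sigma_z \ket{1} = \ket{1}$ and $\sigma_z \ket{0} = -\ket{0}$, i.e. $\sigma_z \ket{b} = (2b-1)\ket{b}$ for $b \in \{0,1\}$. Thus $\ket{b}$ is an eigenvector of $\sigma_z$ with eigenvalue exactly $s = 2b-1$. This single identity is what makes the substitution $s_i = 2b_i - 1$ appear in the Ising energy; it is worth checking explicitly that this matches the paper's (nonstandard) convention $\ket{1} = (1,0)^T$, $\ket{0} = (0,1)^T$, which it does.

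Next, using the tensor-product action rule stated in the excerpt and the shorthand that every unoccupied factor carries the identity, each single-site term acts as $\sigma_{zi}\ket{b_1 \cdots b_n} = s_i \ket{b_1 \cdots b_n}$ and each two-site term as $(\sigma_{zi} \otimes \sigma_{zj})\ket{b_1 \cdots b_n} = s_i s_j \ket{b_1 \cdots b_n}$, with $s_i = 2b_i - 1$. Summing these contributions over all terms of $H_{DW}$ shows that every product basis state is a common eigenvector with eigenvalue $\sum_i h_i s_i + \sum_{1 \leq i < j \leq n} J_{ij} s_i s_j = E_I(b_1,\ldots,b_n)$.

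Finally, since the $2^n$ product states $\{\ket{b_1 \cdots b_n}\}$ form a complete orthonormal basis of $H_n$, I would insert the resolution of the identity $\sum_{b_1,\ldots,b_n} \ket{b_1 \cdots b_n}\bra{b_1 \cdots b_n} = I$ and combine it with the eigenvalue computation to obtain the claimed spectral decomposition. The only real care needed is bookkeeping with the tensor-product shorthand — confirming that the implied identity factors leave the eigenvalues untouched and that the assignment $s_i = 2b_i - 1$ is consistent with the $\sigma_z$ eigenvalues under the paper's basis convention. I expect this bookkeeping to be the sole (and minor) obstacle; there is no genuine analytic difficulty, since the entire statement is just the simultaneous diagonalization of commuting diagonal operators.
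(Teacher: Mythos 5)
Your proposal is correct and follows essentially the same route as the paper's own proof: establish $\sigma_z\ket{b}=(2b-1)\ket{b}$ from the explicit matrix action on $\ket{0}$ and $\ket{1}$, deduce that each product state $\ket{b_1\cdots b_n}$ is a common eigenvector of every term of $H_{DW}$ with total eigenvalue $E_I$, and conclude via the orthonormal completeness of the computational basis. No substantive difference from the paper's argument.
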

 \begin{proof}
 $$
  \sigma_z\ket{1}=\left(
\begin{array}{cc}
 1,\quad 0\\
 0,-1
\end{array}
\right) 
\left(
\begin{array}{cc}
 1\\
 0
\end{array}
\right)
=\left(
\begin{array}{cc}
 1\\
 0
\end{array}
 \right),
$$
$$
\sigma_z\ket{0}=\left(
\begin{array}{cc}
 1,\quad 0\\
 0,-1
\end{array}
\right) 
\left(
\begin{array}{cc}
 0\\
 1
\end{array}
\right)
=-\left(
\begin{array}{cc}
 0\\
 1
\end{array}
\right)
$$
As a result $\sigma_z\ket{b}=(2b-1)\ket{b}$, $b=0,1$. That is why $H_{DW}\ket{b_1,b_2,\cdots, b_n}=E_I(b_1,\cdots,b_n)\ket{b_1,b_2,\cdots, b_n}$, indeed, for instance  $\sigma_{zi}\otimes\sigma_{zj}\ket{b_1,b_2,\cdots, b_n}=(2b_i-1)(2b_j-1)\ket{b_1,b_2,\cdots, b_n}$. Vectors $\ket{b_1,b_2,\cdots, b_n}$, with all possible combinations of $b_i$ form an orthonormal basis in $H_n$, and because all of them are eigenvectors of $H_{DW}$, with corresponding eigenvaluess $E_I$, and the spectral decomposition follows.  
\end{proof}   

Corollary:
Ground-state of the D-Wave system corresponds to $\min E_I$. Eigenstates are product states - no entanglement. 

    \subsection{Regularity check using quantum existence algorithm}
    The regularity check (RC) of a bipartite graph is computationally hard problem requiring, in the worst cases, exponential time in number of nodes. Quantum search using so called Grover's algorithm (see e.g. \cite{portugal}) could improve the time needed for the RC. The time is still exponential but with smaller base of the exponential function. This corresponds to the famous speed-up from $N$ to $\sqrt{N}$, corresponding to exhaustive search of $N$ items with classical versus Grover's search.
    
    Grover's algorithms finds solution to the following problem: let among $N=2^m$ items, encoded as integers $M:=\{0,\cdots,N-1\}$, be exactly one marked element $x_0\in M$; find $x_0$. It is assumed that there is a function $f: M\rightarrow \{0,1\}$ such that $f(x_0)=1$ and otherwise $f(x)=0$ for all other $x\in M$. It is assumed that $f$ can be computed quickly. In other words if the solution is found, it can be easily verified as such.
    
    First step is to map items to a register of $m$ qbits. Each number $x\in M$ is written as a binary tuple of $m$ bits: $(x_1,\cdots,x_m)$, $x_i\in\{0,1\}$. Then such a number is mapped to the quantum state of $m$ qbits: $\ket{x}=\ket{x_1\cdots x_m}$. 
    
    The quantum oracle is an unitary operator acting according to the rule: $R_f\ket{x_0}=-\ket{x_0}$ and $R_f\ket{x}=\ket{x}$, $\forall x\neq x_0$. Superposition of all states corresponding to $M$ is denoted as $\ket{D}=\frac{1}{\sqrt{N}}\sum_{x\in M}\ket{x}$. $R_D:=2\ket{D}\bra{D}-I$ is another unitary operator needed. Product of these two operators is unitary operator $U:=R_DR_f$, which is the main transformation in Grover's algorithm. The process starts from state $\ket{D}$; after $t\in\mathbb{N}$ steps the system is in the state:
    
    $$
    \ket{\psi_t}:=U^t\ket{D}, \quad t=1,2,\cdots.
    $$
    If $t_0=\floor{\frac{\pi}{4}\sqrt{N}}$, then with probability $p_0>1-\frac{1}{N}$, the state $\ket{\psi_{t_0}}$ is $\ket{x_0}$, and the binary representation of the solution $x_0$ is found by measuring the state of the quantum register $\ket{\psi_{t_0}}$.
    
    An analysis shows that $U^t\ket{D}$ is a rotation by an angle $t\theta$ in two dimensional plane spanned by vectors $\ket{x_0}$ and $\ket{D}$ in which $\theta\approx \frac{2}{\sqrt{N}}$ is a constant. In this plane $U$ has representation:
     $$
  U=\left(
\begin{array}{cc}
 \cos(\theta ), -\sin(\theta )\\
 \sin(\theta ),\quad\cos(\theta )
\end{array}
\right). 
$$
 Eigenvalues of this matrix are $e^{i\theta}$  and $e^{-i\theta}$. The corresponding eigenvectors are:
 $$
 \ket{\alpha_1}=\frac{\ket{0}-i\ket{1}}{\sqrt{2}},\quad \ket{\alpha_2}=\frac{\ket{0}+i\ket{1}}{\sqrt{2}}
 $$
  in which $\ket{1}$  corresponds to the state $\ket{x_0}$ and $\ket{0}$ corresponds to the vector orthogonal to $\ket{x_0}$ in the two dimensional plane we described above. In these notations the initial vector $\ket{D}$ can be written as:
  $$
  \ket{D}=\frac{\sqrt{N-1}}{\sqrt{N}}\ket{0}+\frac{1}{\sqrt{N}}\ket{1}=\cos(\frac{\theta}{2})\ket{0}+\sin(\frac{\theta}{2})\ket{1}.
  $$
\end{appendix}
As a result we have:
$$
U^t\ket{D}= \cos(\theta t+\frac{\theta}{2})\ket{0}+ \sin(\theta t+\frac{\theta}{2})\ket{1},
$$
which indicates that square of amplitude of the $\ket{1}$, which is $\sin^2(\theta t+\frac{\theta}{2})$, at step $t=t_0$ is very close to one. This means that when the register is read/measured, the result is $\ket{x_0}$ with probability very close to one, assuming large $N$. This also indicates that $t_0$ should be known. For $t>t_0$, the amplitude of $\ket{x_0}$ starts to decrease. 

In case there are $\abs{M}\geq 1$ solutions to equation $f(x)=1$, similar algorithm works. However, the $\theta$-parameter changes to $\theta= 2\sqrt{\frac{\abs{M}}{N}}$, and similarly number of steps changes. As a result, in order to use Grover's algorithm, the number of solutions need to be known beforehand. For details, see e.g. \cite{portugal}.

It appears that there exist quantum algorithms for evaluating number of solutions without actually finding the solutions, \cite{imre}. One of them is based on estimating eigenvalue of an unitary operator, which is of the form $e^{i\theta}$, $\theta\in [0,2\pi)$. Indeed if we can find eigenvalue of $U$ of the Grovers's operator: $U\ket{\alpha_1}=e^{i\theta}\ket{\alpha_1}$, we can find out number of solutions, since $\theta= 2\sqrt{\frac{\abs{M}}{N}}$. This covers the case in which there is no solution: $U=I$, and $\theta=0$. This problem is called quantum phase estimation problem. The task of finding out whether $\abs{M}=0$ or $\abs{M}>0$, is called quantum existence problem \cite{imre}. 

Quantum phase estimation, see \cite{cleve}, is based on quantum Fourier transformation ($QFT$). $QFT$ is a linear mapping between $m$ qbit states:
$$
QFT(\ket{a})= \frac{1}{\sqrt{N}}\sum_{k=0}^{N-1}e^{\frac{2\pi ia k}{N}}\ket{k}. \quad  a\in\{0,1,\cdots, N-1\}
$$
the inverse mapping is:
$$
QFT^{-1}(\ket{a})=\frac{1}{\sqrt{N}}\sum_{k=0}^{N-1}e^{-\frac{2\pi ia k}{N}}\ket{k}.
$$
For any superposition of states $QFT$ and $QFT^{-1}$ are defined by linearity, say, $QFT(\sum_{i}c_i\ket{i})=\sum_ic_iQFT(\ket{i})$.

In case of $m$ qbits we have integers $a: 0\leq a\leq 2^{m-1}$, and we can write $a$ in binary basis: $a=2^{m-1}a_m+2^{m-1}a_2+\cdots+2^0a_1$, where coefficients $a_i$ are binary numbers. Then, by definition, we write:
$$
\ket{a}:=\ket{a_m}\otimes\ket{a_{m-1}}\otimes\cdots\otimes \ket{a_1}:=\ket{a_m\cdots a_1}.
$$
It appears that $QFT$ can be written as:
\begin{eqnarray*}
QFT(\ket{a})=(\ket{0}+e^{2\pi i(0.a_m)})(\ket{0}+e^{2\pi i(0.a_ma_{m-1})}\ket{1})\\\cdots (\ket{0}+e^{2\pi i(0.a_m\cdots a_1)}\ket{1}).
\end{eqnarray*}
Let assume that unitary operator $U$ has eigenstate $\ket{u}$, with eigenvalue $e^{i\theta}=e^{2\pi i0.a_ma_{m-1}\cdots a_1}$ ( $m $-binary-digit eigenvalue).

An operator $ C\_U$, called the controlled $U$. It acts on vectors like $\frac{1}{\sqrt{2}}(\ket{0}+\ket{1})\otimes \ket{u}$. $C\_U$ acts as $U$ on the $\ket{u}$ if the value  of the first qbit is 1, otherwise it is unit operator, for instance:
\begin{eqnarray*}
C\_U\frac{1}{\sqrt{2}}(\ket{0}+\ket{1})\ket{u}=\frac{1}{\sqrt{2}}(\ket{0}\ket{u}+\ket{1} U\ket{u})=\\\frac{1}{\sqrt{2}}(\ket{0}\ket{u}+e^{i\theta}\ket{1}\ket{u})=\frac{1}{\sqrt{2}}(\ket{0}+e^{i\theta}\ket{1}) \ket{u}.
\end{eqnarray*}
The phase estimation uses the following scheme. First register has $m$ qbits, all in states $\frac{1}{\sqrt{2}}(\ket{0}+\ket{1})$, second register is in the state $\ket{u}$, the eigenstate of $U$, phase of which we wish to find. Then controlled $U$ operators are applied in sequence: $C_j\_U^{2^j}$, $j=0,1,\cdots, m-1$, where controlled $U$ with index $j$, uses $j$th qbit from the bottom of the first register as a control qbit. It easy to check that the state of the register after these operations equals to (omitting the normalization coefficient):
\begin{eqnarray*}
(\ket{0}+e^{i\theta2^{m-1}}\ket{1})(\ket{0}+e^{i\theta2^{m-2}}\ket{1})\cdots\\(\ket{0}+e^{i\theta2^0}\ket{1})\ket{u}=
\end{eqnarray*}
\begin{eqnarray*}
(\ket{0}+e^{2\pi i0.a_m}\ket{1})(\ket{0}+e^{2\pi i0.a_ma_{m-1}}\ket{1})\cdots
\end{eqnarray*}
\begin{eqnarray*}
(\ket{0}+e^{2\pi i0.a_ma_{m-1}\cdots a_1}\ket{1})\ket{u}.
\end{eqnarray*}
As a result this is just $QFT(\ket{a_ma_{m-1}\cdots a_1})\ket{u}$. By applying the $QFT^{-1}$ to the first register we get the state $\ket{a_ma_{m-1}\cdots a_1}\ket{u}$, and by measuring the first registers, all digits of the phase can be read. 

In the case when $\theta$ has more significant digits, the above scheme still provides an approximation for the angle, see \cite{cleve}, which is very accurate for large enough $m$.

In the case of the Grovers's algorithm, we use its unitary operator as $U$ in the phase estimation algorithm. One issue is that the initial state $\ket{d}$ is a linear combination of eigenvectors with corresponding phases $\theta$ and $-\theta$. The phase estimation algorithm results in either $\theta$ or $2\pi-\theta$, and in both cases it is possible to find out estimate of $\theta$. Omitting normalization coefficients of the state, we have $\ket{d}=\ket{\alpha_1}+\ket{\alpha_2}$, with $U\ket{\alpha_1}=e^{i\theta}\ket{\alpha_1}$ and $U\ket{\alpha_2}=e^{-i\theta}\ket{\alpha_2}$. To get $m$ digit approximation of $\theta$, we start from the state $(\ket{0}+\ket{1})^{\otimes m}(\ket{\alpha_1}+\ket{\alpha_2})= (\ket{0}+\ket{1})^{\otimes m}\ket{\alpha_1}+ (\ket{0}+\ket{1})^{\otimes m}\ket{\alpha_2}$. Applying the sequence of linear operators, $C_j\_U^{2^j}$, in the phase estimation algorithm, the result is $QFT(\ket{\theta})\ket{\alpha_1}+QFT(\ket{2\pi-\theta})\ket{\alpha_2)}$, in which $\ket{\theta}$ corresponds to $m$-digits approximation of the angle $\theta$. Applying $QFT^{-1}$ and measuring the first register will give bit string of $\theta$ or $2\pi-\theta$ at random. 

Next we show how the Grover's algorithm and the quantum phase estimation can be used to solve the RC problem. Use the same notation as in the main text in the Section II. We have a bipartite graph $G(V_1,V_2,d)$, with $n_1+n_2$ nodes. Define:
$$
F(X,Y):=\abs{\abs{ X}\abs{Y}(d(A,B) - d(X,Y))}, X\subset V_1, Y\subset V_2
$$
$G$ is $\epsilon$-regular iff
$$
\forall X\subset V_1, Y\subset V_2: F(X,Y)\leq \epsilon n_1n_2
$$
We encode $X$ and $Y$ using $n_1+n_2$ bits. This description can be mapped to $m$-qbit states, in which bit value $1$ indicates that the corresponding node belongs to $X$ or $Y$. 

We define the oracle function as:
$f(X,Y)=1$ iff $F(X,Y)>\epsilon n_1n_2$, otherwise $f(X,Y)=0$. The function $f$ is easily computable for any argument.
The RC can be formulated as:

Graph $G$ is $\epsilon$-regular only if $f(X,Y)=1$ has no solution. 
\begin{proposition}
Regularity check problem can be resolved in $\sim\sqrt{2^{n_1+n_2}}$ steps of using a quantum gate computer
\end{proposition}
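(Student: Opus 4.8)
The plan is to reduce the regularity check (RC) to the quantum existence problem and then apply the phase-estimation machinery developed above. Set $N=2^{n_1+n_2}$ and identify each bit string of length $n_1+n_2$ with a pair $(X,Y)$, a bit equal to $1$ indicating membership in $X$ or $Y$. As noted, $G$ is $\epsilon$-regular precisely when the equation $f(X,Y)=1$ has no solution, i.e. when the number of solutions $\abs{M}$ equals zero. Since $f$ is efficiently computable, it furnishes a quantum oracle $R_f$ with $R_f\ket{x}=-\ket{x}$ on the solutions and $R_f\ket{x}=\ket{x}$ otherwise; from it I build the Grover operator $U=R_DR_f$ with $R_D=2\ket{D}\bra{D}-I$ and $\ket{D}=\frac{1}{\sqrt{N}}\sum_x\ket{x}$.

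Next I would run quantum phase estimation on $U$ using $\ket{D}$ in the eigenstate register. As established above, $\ket{D}$ lies in the plane spanned by the two eigenvectors of $U$ with eigenvalues $e^{\pm i\theta}$, where $\theta=2\sqrt{\abs{M}/N}$. Phase estimation then returns an $m$-bit approximation of $\theta$ (or of $2\pi-\theta$), from which $\abs{M}$ can in principle be read off; for the existence question it suffices to decide whether $\theta=0$ (so $U=I$, $\abs{M}=0$, $G$ regular) or $\theta>0$ ($\abs{M}\ge1$, $G$ irregular).

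The decisive point is the precision analysis. When $\abs{M}\ge1$ the smallest possible value of $\theta$ is attained at $\abs{M}=1$, namely $\theta=2/\sqrt{N}$, whereas $\abs{M}=0$ gives $\theta=0$. To separate the two cases the estimate of $\theta$ must be accurate to better than $2/\sqrt{N}$; since phase estimation with a first register of $m$ control qubits resolves phases down to order $2\pi/2^{m}$, this forces $2^{m}\sim\sqrt{N}$. Because the $j$-th controlled step applies $U^{2^{j}}$, the total number of invocations of $U$, and hence of oracle calls, is $\sum_{j=0}^{m-1}2^{j}=2^{m}-1\sim\sqrt{N}=\sqrt{2^{n_1+n_2}}$, which is the advertised running time.

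The main obstacle I anticipate is making the precision-and-confidence bookkeeping rigorous. One must show that $\sim\sqrt{N}$ applications of $U$ suffice to distinguish $\theta=0$ from $\theta\ge 2/\sqrt{N}$ with success probability bounded away from $1/2$, and that the ambiguity between $\theta$ and $2\pi-\theta$ arising from the decomposition $\ket{D}=\ket{\alpha_1}+\ket{\alpha_2}$ into the two eigenphases does not obstruct the decision. This reduces to a careful application of the standard phase-estimation error bounds of \cite{cleve}, together with the clean observation that $U$ collapses to the identity exactly when $\abs{M}=0$, which makes the regular case sharply detectable.
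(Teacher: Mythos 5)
Your proposal is correct and follows essentially the same route as the paper: reduce the regularity check to the quantum existence problem, apply quantum phase estimation to the Grover operator built from the oracle $f$, argue that distinguishing $\theta=0$ from $\theta\ge 2/\sqrt{N}$ requires $m\approx\tfrac12(n_1+n_2)$ control qubits, and count $\sum_{j=0}^{m-1}2^j\sim\sqrt{N}$ applications of $U$. Your write-up is in fact somewhat more careful than the paper's sketch, since you make the precision threshold explicit and flag the $\theta$ versus $2\pi-\theta$ ambiguity that the paper also leaves to the cited references.
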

\begin{proof}(A sketch)
The standard operators and quantum states to use Grover's algorithm to find solutions of $f(X,Y)=1$, are obvious in our formulation of RC. In RC we only need to know whether $M>0$ or not. That is why the quantum phase estimation is appropriate. From Grover's algorithm it is known that if $M>0$, the phase has order of magnitude $\sim 1/\sqrt{N}$, where $N:=2^{n_1+n_2}$. That is why, we need approximation of $\theta$ that has of the order of $m=\log_2(\sqrt{N)}=\frac{1}{2}(n_1+n_2)$ binary digits. Using the phase estimation algorithm we need to apply controlled Grover's transform  $2^{m-1}+2^{m-2}+\cdots + 1=2^m=\sqrt{2^{n_1+n_2}}$ times. This comes from the controlled Grover's transforms $C_j\_U^{2^j}$ in which $U$ is applied $2^j$ times.
\end{proof}
%
%
{}
\end{document}